\documentclass[aps,prx,superscriptaddress,amsfonts,amsmath,amssymb,showpacs,floatfix,reprint,longbibliography]{revtex4-2}
\RequirePackage[l2tabu,orthodox]{nag}
\usepackage{sidmid}
\hypersetup{
pdfsubject = {},
pdftitle = {},
pdfauthor = {}
}
\makeatletter
\AddToHook{begindocument}{%
  \@ifpackageloaded{babel}{}{%
    \@ifpackageloaded{polyglossia}{}{%
      \renewcommand{\selectlanguage}[1]{}%
    }%
  }%
}
\makeatother
\begin{document}

\title{Local autonomous inference machines for quantum LDPC codes}
\author{Siddhant Midha}
\email{siddhantm@princeton.edu}
\affiliation{Princeton Quantum Initiative, Princeton University, Princeton, NJ 08544}

\author{Dmitry A. Abanin}
\affiliation{Princeton Quantum Initiative, Princeton University, Princeton, NJ 08544}
\affiliation{Department of Physics, Princeton University, Princeton, NJ 08544}
\affiliation{\'{E}cole Polytechnique F\'{e}d\'{e}rale de Lausanne (EPFL), 1015 Lausanne, Switzerland}

\begin{abstract}
We introduce a local and distributed framework for decoding quantum LDPC codes. Built atop belief propagation (BP), the architecture combines strictly local processing of syndrome information, local feedback, autonomous operation, and highly parallelized decoding. We establish the framework at two complementary levels. First, we show that any code exhibiting a threshold under iterative BP decoding can be promoted to an autonomous local measurement-and-feedback dynamics while preserving that threshold. We then turn to codes for which naive BP does not itself exhibit threshold behavior. Our key observation is that BP can nevertheless provide sufficiently accurate \emph{local} marginal information: active syndrome defects use these local beliefs to \emph{move}, pair, and annihilate through asynchronous feedback. We demonstrate numerically that this construction exhibits threshold behavior in the point-like sectors of the two- and three-dimensional toric codes, where conventional BP decoding fails. Exploiting the graph-generality of BP, we then study the inference dynamics on the membrane-like sector of the three-dimensional toric code and in a family of bivariate-bicycle quantum LDPC codes.

\end{abstract}
\maketitle

\begin{figure}
    \centering
    \includegraphics[width=1.0\linewidth]{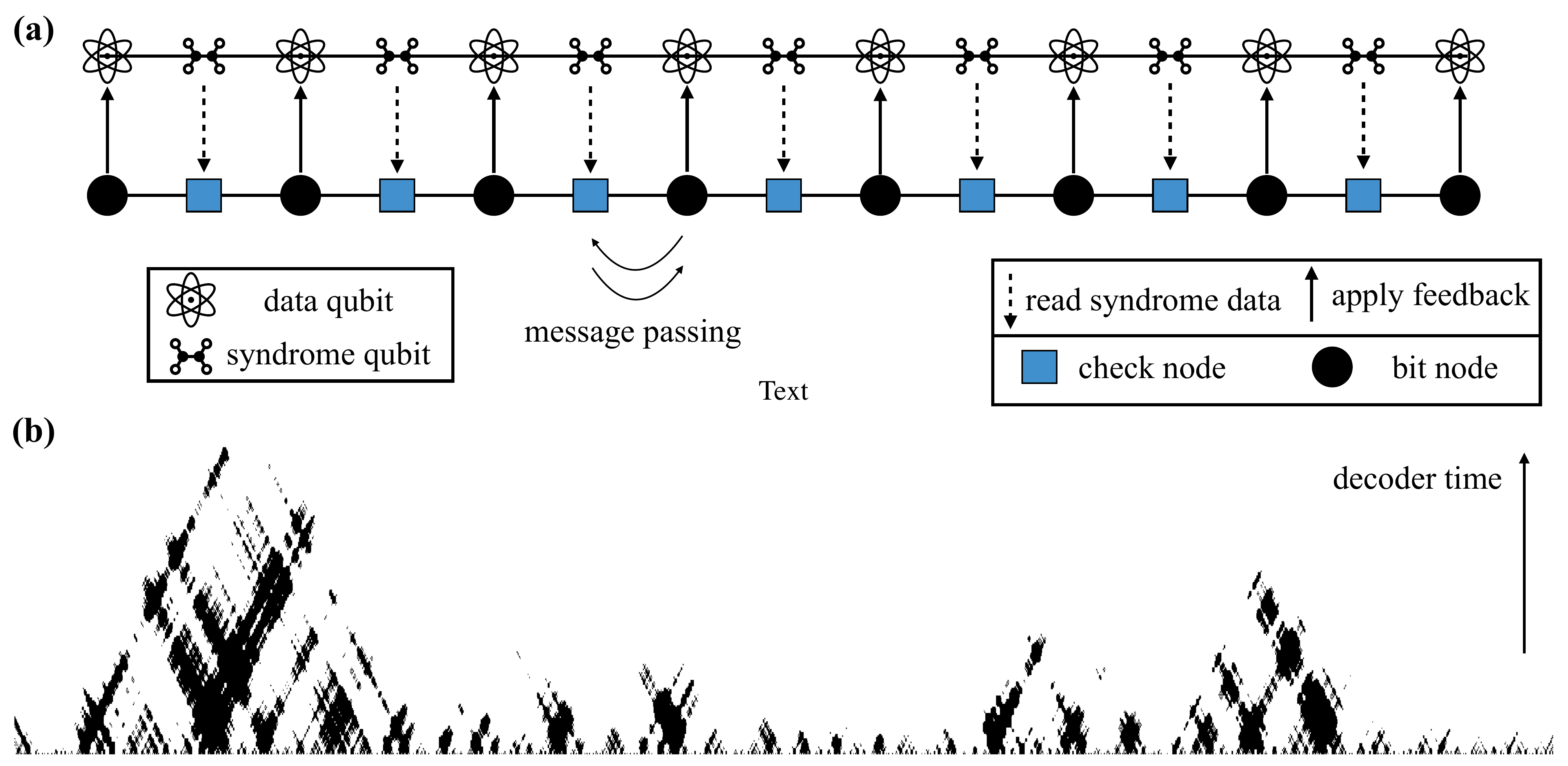}
    \caption{\textbf{Local autonomous inference}. (a) We consider the quantum processor, the \emph{device} consisting of data qubits undergoing $Z-$errors and syndrome qubits performing perfect measurements laid out according to the $X-$parity check matrix $H_X$. A series of classical processors performing message passing is laid out with the same geometry as the device. The device provides local information to the processors with syndrome data, and feedback is applied back onto the device. (b) Example trajectory of the residual error on the device as a function of decoder time (going upwards) on the repetition code with $d=2000$ at physical error rate $p=0.45$.}
    \label{fig:fig_setup}
\end{figure}

\textit{Introduction.---}Quantum error correction is essential for scalable, fault-tolerant quantum computation~\cite{TerhalQECReview,gottesmanIntroductionQuantumError2009d}. A key challenge is \emph{decoding}: the task of inferring an appropriate correction from the measured syndrome. As quantum processors scale, this inference task must itself remain efficient, with communication and processing requirements compatible with the locality and timescales of the underlying hardware~\cite{fowlerPracticalClassicalProcessing2012,ioliusDecodingAlgorithmsSurface2024b}. At the same time, local measurement-and-feedback dynamics closely related to ``active'' error correction have emerged as a setting for realizing and studying nonequilibrium quantum phases of matter~\cite{ODeaAbsorbing,FriedmanFeedback}. These two perspectives therefore motivate decoding architectures based on strictly local information processing and feedback.

The decoding problem for quantum codes has been extensively studied, yielding a wide range of efficient and high-performance algorithms. Many widely used decoders, however, require processing syndrome information over length scales that grow with the system size. For ``matchable'' topological codes, minimum-weight perfect matching (MWPM) formulates decoding as a global matching problem among syndrome defects \cite{dennisTopologicalQuantumMemory2002,fowlerPracticalClassicalProcessing2012,fowlerMinimumWeightPerfect2014}. Clustering-based decoders, such as union-find, grow and merge syndrome clusters over progressively increasing length scales \cite{delfosseUnionFindDecoderQuantum2021,delfosseAlmostlinearTimeDecoding2021a,chanActisStrictlyLocal2023}, while modern matching implementations efficiently exploit the sparsity and locality of the decoding graph~\cite{petersonDistributedBlossomAlgorithm2022,wuFusionBlossomFast2023,higgottSparseBlossomCorrecting2025}. Renormalization-group based decoders similarly exploit a hierarchy of increasing length scales~\cite{ducloscianciFastDecodersTopological2010d,bravyiAnalyticNumericalDemonstration2013,hutterImprovedHDRGDecoders2015a}.

For quantum LDPC codes, several successful decoders combine local message passing with nonlocal post-processing procedures~\cite{roffeDecodingQuantumLDPC2020a,panteleevDegenerateQuantumLDPC2021b}. Notable exceptions include local iterative decoders based on expansion, such as the small-set flip algorithm and related constructions~\cite{leverrierQuantumExpanderCodes2015,fawziEfficientDecodingRandom2018,guEfficientDecoderLinear2022,dinurGoodQuantumLDPC2022,leverrierDecodingQuantumTanner2022}. Aside from locality, many of these methods are typically implemented in a non-autonomous, ``decode-then-correct'' fashion: syndrome data are first collected, a classical algorithm is run to completion, and only then is the inferred correction applied. This separation between inference and feedback contrasts with an \emph{autonomous} decoder, in which local processing and corrective dynamics proceed concurrently.

A natural starting point for local decoders is the belief propagation (BP) algorithm~\cite{kschischangFactorGraphsSumProduct}. BP corresponds to a family of message passing algorithms in which information is exchanged between neighbors of a Tanner graph, and has played a central role in the success of classical LDPC codes~\cite{gallager1962low,tanner1981recursive,mackay1996near,richardson2001capacity}. These methods have also been adapted quite extensively to quantum codes, as far back as~\cite{mackaySparseGraphCodes2004a}. Their performance in the quantum setting, however, is often substantially limited by features absent from the classical counterparts, most notably the extensive error degeneracy and the resulting trapping sets of standard BP~\cite{poulinIterativeDecodingSparse2008,raveendranTrappingSetsQuantum2021a}. Consequently, successful decoding with BP often requires augmenting BP with additional, and frequently nonlocal decoding steps~\cite{ducloscianciFastDecodersTopological2010d,roffeDecodingQuantumLDPC2020a, panteleevDegenerateQuantumLDPC2021b,yeBeamSearchDecoder2025,mullerImprovedBeliefPropagation2025a,kuoExploitingDegeneracyBelief2022a,rigbyModifiedBeliefPropagation2019,grospellierNumericalStudyHypergraph2019,crestStabilizerInactivationMessagePassing2023,crestCheckAgnosiaBasedPostProcessor2024,gongLowlatencyIterativeDecoding2024a,hillmannLocalizedStatisticsDecoding2024,yinSymBreakMitigatingQuantum2024,higgottImprovedDecodingCircuit2023a,wolanskiAmbiguityClusteringAccurate2025}

To align with the goal of realizing error correction as a local non-equilibrium process, we take the following \emph{distributed computing} approach to a local decoder. We envision classical processors laid out according to the Tanner graph of the quantum code, with the only allowed operations being: local measurement of syndrome information, communication between neighboring processors, ultimately followed by local feedback back onto the quantum processor. This viewpoint is closely related to the existing cellular-automaton decoders present in the literature~\cite{harrington2004analysis,heroldCellularautomatonDecodersTopological2015a,breuckmannLocalDecoders2D2016,heroldCellularAutomatonDecoders2017,kubicaCellularautomatonDecodersProvable2019,vasmerCellularAutomatonDecoders2021,guedesQuantumCellularAutomata2024,palettaHighperformanceLocalDecoders2025b,lakeFastOfflineDecoding2025,lakeLocalActiveError2025a,balasubramanianLocalAutomaton2D2026,selubLocalDecodersFaulttolerant2026} where the decoding is re-cast as a dynamical process in its own right.

Our main result is a local measurement-and-feedback architecture for inference on quantum codes. A series of classical processors (see \cref{fig:fig_setup}) exchange messages amongst themselves according to belief propagation. Each check node is additionally equipped with an independent Poisson clock: when the clock of an active check rings, the check is allowed to \emph{move} by using the local beliefs for direction. In the toric code, these updates induce a local dynamics in which point-like defects move, pair, and annihilate. The same construction, however, is defined directly on the Tanner graph and therefore extends naturally to more general quantum LDPC codes.

The key departure from conventional BP decoding is that we do not require the message-passing dynamics to produce a globally consistent correction. Standard BP is typically iterated either for a prescribed number of rounds or until a convergence criterion is met, after which a global hard decision is formed. When termination is based on convergence, certifying that the decoder has finished requires information to be aggregated across the code and is therefore itself a nonlocal operation. Moreover, for the toric code, this conventional use of BP does not exhibit a decoding threshold~\cite{roffeDecodingQuantumLDPC2020a}. By contrast, we only use BP as a local inference engine. Individual marginals can provide reliable local information even when the collection of beliefs is not globally consistent. These local inferences are converted immediately into physical feedback, which updates the syndrome and allows the inference process to continue dynamically.

The paper is organized as follows. We first define a discrete-time inference machine, which takes conventional BP and turns it into an autonomous dynamics. This realizes an inference machine for any code already exhibiting a threshold under ordinary BP. For concreteness, we prove its performance on the one-dimensional repetition code. We then turn to the toric code, where we emphasize a key conceptual distinction: rather than using local message passing to reconstruct a globally consistent solution, we use it only to distribute reliable local marginal information. This motivates a continuous-time inference machine in which active syndrome defects use the local beliefs to move asynchronously as dictated by independent Poisson clocks. We show numerically that the resulting dynamics exhibits threshold behavior in the point-like sectors of both the two- and three-dimensional toric codes. Finally, exploiting the Tanner-graph generality of BP, we extend the construction beyond these settings and study its performance on the membrane-like sector of the three-dimensional toric code and on a family of bivariate-bicycle quantum LDPC codes.

\textit{Preliminaries.---}We work with $[[n,k,d]]$ CSS quantum codes on qubits described by parity-check matrices 
\[
H_X\in\Ftwo ^{m_X\times n},
\qquad
H_Z\in\Ftwo ^{m_Z\times n},
\]
satisfying the commutation conditions $H_Z H_X^\top = 0$ and work throughout in the code-capacity setting with perfect-syndrome measurements. For concreteness, we deal with $Z-$only noise. The corresponding inference problem is naturally represented on a Tanner graph $\cT=(\cV \sqcup \cC, \cE)$, whose variable nodes $v\in \cV$ represent data qubits and the check nodes $\cC$ represent $X-$type stabilizer measurements with $\{v,c\} \in \cE$ iff $[H_X]_{cv}=1$. 

Given a syndrome $S \subset \cC$  produced by the ground-truth error chain $E_g\subset\cV$~\footnote{We use capitalized $E\subset\cV$ for subset notation and $e \in \Ftwo^n$ to denote bit-vectors.}, the decoding task is to infer an error $E \subset \cV$ that is, 
\begin{enumerate}
    \item[(1)] \emph{Consistent.} That is, it produces the same syndrome $H_X e = s$, and 
    \item[(2)] \emph{Homologically equivalent.} Differing from the true error only by a stabilizer $e\oplus e_g\in\operatorname{row}(H_Z)$ 
\end{enumerate}
If either (1) or (2) is not true, we have a decoder failure.

The procedure of belief propagation can be expressed by writing down the ``partition function,'' for a given syndrome $s$,
\[
\Pr(s) := \sum_{e: H_X e = s} q^{|e|}(1-q)^{n-|e|}
\]
as a tensor network with bond dimension 2, and then performing message-passing over it~\cite{TNBP,TNMP}. The formal construction is described in the appendix. Working directly with log-likelihoods, each edge $\{c,v\}\in \cE$ is assigned two \emph{messages}, $m_{c\to v}, m_{v\to c}\in \R$, and the message-passing procedure consists of performing the following update iteratively,
\begin{align}\label{eq:BP-LLR1}
     m_{v \to c} &\leftarrow \pi_v + \sum_{c' \in \cN(v) / c}m_{c' \to v} \\  \label{eq:BP-LLR2}
    m_{c\to v} &\leftarrow 2(-1)^{s_c + d_c} \atanh\left(\prod_{w\in\cN(c)/v}\tanh\left[\frac{m_{w\to c}}{2}\right]\right)
\end{align}
where, under an i.i.d. error model with decoder prior $q$, we have $\pi_v = \log{q/(1-q)}$, and $d_c := |\cN(c)|$. Denoting the combined message state as $m$ and the update map as $F_{\BP}$, the iteration is compactly written as $m^{(i+1)} = F_{\BP}(m^{(i)})$. Given a set of messages $m$, one obtains an approximation to the local bit beliefs $\Pr(e_v|s)  \approx \Pr_{\BP}(e_v|s)$, with
\[
 \log{\frac{\Pr_{\BP}(e_v|s)}{1-\Pr_{\BP}(e_v|s)}} \equiv L_v := \pi_v + \sum_{c\in\cN(v)}m_{c\to v}
\]
The traditional paradigm of BP based decoders proceeds as follows. One runs the BP iterations for a fixed number of rounds. From the available messages, one then outputs an error candidate $e$ defined bitwise by as $e_{v}:= \argmax_{b\in\{0,1\}} \Pr_{\BP}(e_v=b|s)$. If the error has ``converged,'' viz., $H_X e=s$, one outputs $e$. This procedure is exact on inference problems on trees, and an approximation otherwise.

\begin{figure}
    \centering
    \includegraphics[width=1.0\linewidth]{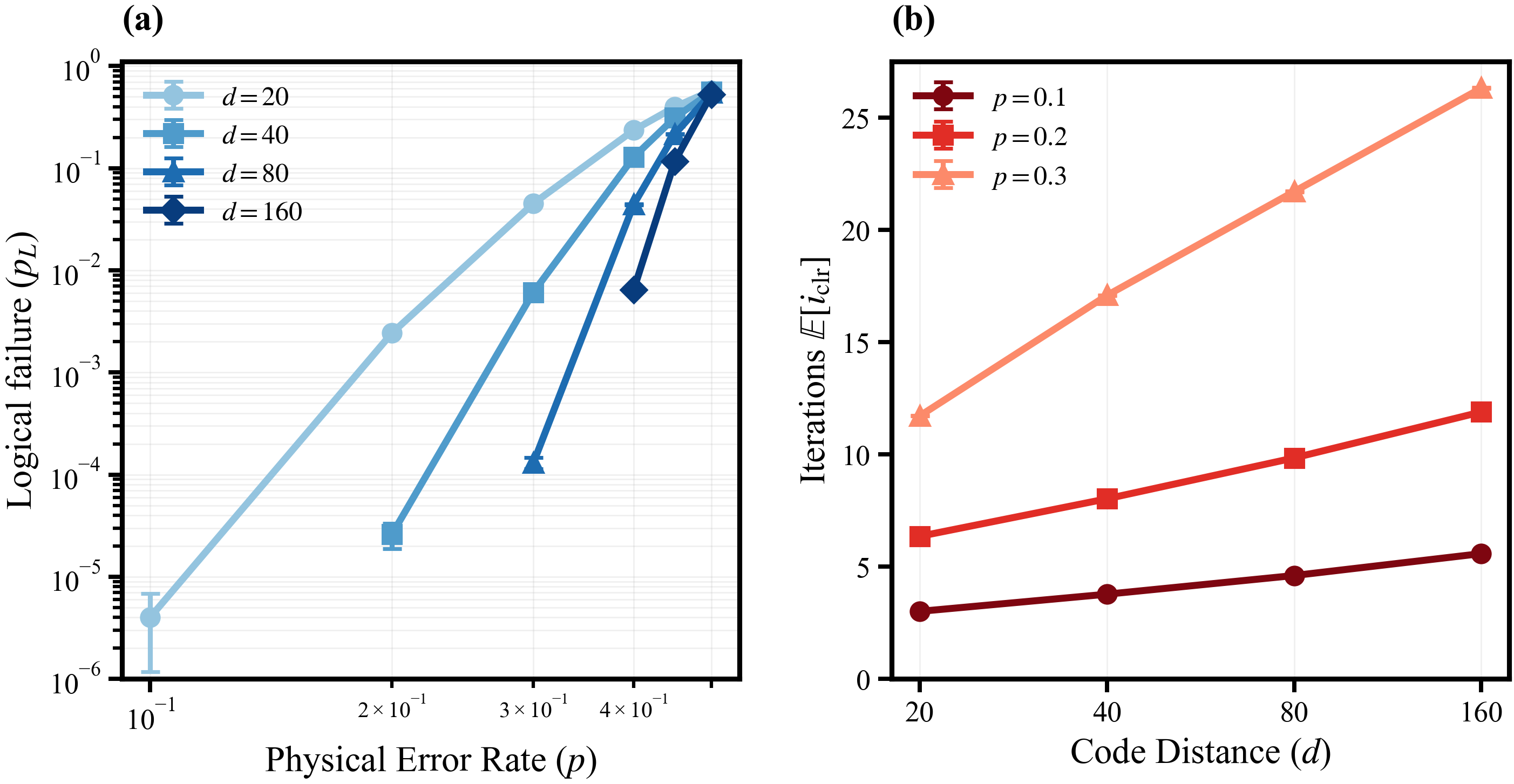}
    \caption{\textbf{Discrete local inference dynamics.} (a) Logical error rates $p_L$ as a function of physical error rate $p$ for the repetition code with distances $d\in \{20,40,80,160\}$. (b) Average clearing iteration $\mathbb{E}[i_{\text{clr}}]$ as a function of code distance $d$.}
    \label{fig:repcode}
\end{figure}

\textit{Local inference machines.---}We now proceed to construct a local inference machine, comprising local measurement, communication and subsequent feedback. We envision an array of classical processors, one for each of a data and measurement qubit, laid out according to the geometry of the device. This is illustrated for the repetition code in~\cref{fig:fig_setup}. Each individual processor $x$ carries the state $\xi_x$, which includes the outgoing message $m_{x\to y}$ for every neighbor $y$. In addition, a check processor $c\in\cC$ stores its local syndrome value $s_c$, and each data processor $v\in \cV$ stores its prior $\pi_v$. The state of a processor is updated using only its own state, and those of its immediate neighbors, implementing the BP equations of \cref{eq:BP-LLR1,eq:BP-LLR2}. Thus, instead of running BP as a standalone iterative algorithm and subsequently forming global decisions, we regard the distributed message dynamics as the inference layer of an autonomous dynamics---the local corrections continuously modify the syndrome on which inference is performed. 

The inference dynamics is equipped with local \emph{moves}, wherein a data processor decides whether to apply feedback. Concretely, at iteration $i$ the following updates occur from the data of $i-1$: the processor $v\in\cV$ reads the state of its neighbors, and forms the feedback bit $f_v^{(i)} := \1[L^{(i)}_v > 0]$ where 
\begin{equation}\label{eq:llrupdate}
    L^{(i)}_v \leftarrow \pi_v^{(i-1)} + \sum_{c\in\cN(v)}m^{(i-1)}_{c\to v}
\end{equation}
That is, a data processors decides to apply feedback if its local belief satisfies $\Pr_{\BP}(e_v=1|s) > 0.5$. To account for nontrivial feedback on bit $v$, all of its corresponding messages and its prior are multiplied subsequently by a gauge factor of $(-1)^{f^{(i)}_v}$. Every check processor $c\in\cC$ then updates its syndrome register, 
\begin{equation}\label{eq:synupdate}
    s_c^{(i)} \leftarrow s_c^{(i-1)} \oplus \bigoplus_{v\in\cN(c)}f_v^{(i)}
\end{equation}
Crucially, the dynamics of \cref{eq:llrupdate,eq:synupdate} are \emph{autonomous}, no global ``overseer'' checks, e.g., if the messages have converged. This way, any code obtaining a threshold under ordinary BP can attain the same threshold using the discrete autonomous machine. For concreteness, we proceed to prove the performance of the inference dynamics on the 1D repetition code. This will be instructive in setting out expectations of the average time taken to decode with the machine. The clearing iteration $i_{\text{clr}}$ is defined to be the time to clear the syndrome $\min_{i \geq 0} s^{(i)} = \bm 0$.

\begin{maintextthm}[Local inference on the rep code]\label{maintextthm:rep}
    The discrete inference machine achieves the optimal threshold $p_{\thr} = 0.5$ with an average clearing time of $\E[i_{\clr}] = O(\log d)$ for a distance $d$ rep code. 
\end{maintextthm}

The main technical ingredient of this proof is an exact mapping of the local inference dynamics onto ordinary belief propagation, which is known to be exact in one dimension (and more generally, on trees). The rest follows by using standard ``clustering'' properties of the i.i.d. noise distribution. In particular, clusters of diameter $w$ get cleared in $O(w)$ time. The $O(\log d)$ clearing time then follows from the fact that one only needs to clear upto log-sized clusters to attain threshold performance~\cite{kovalevFaultToleranceQuantum2013}. The formal proof can be found in the SM. We numerically demonstrate the decoder in \cref{fig:repcode} for distances $d\in \{20,40,80,160\}$. We note that the LER curves intersect at $p=0.5$, and the average clearing time follows a logarithmic trend.

We now ask whether the same strategy can be extended to quantum codes, taking the toric code as the simplest nontrivial setting. An obstacle is immediate: standard belief propagation does not exhibit a threshold on the toric code~\cite{roffeDecodingQuantumLDPC2020a}. It is instructive to understand this failure, as it exposes a basic tension between local marginal information computed by BP and and demanding a globally consistent solution. To see this, consider an elementary plaquette (far away from other defects) of the toric code with qubits on the edges and checks on the vertices, and consider a syndrome on diagonally opposite corners. 
\[
\LoopTN
\]
Now, even if an algorithm can provide local marginals, the individual posterior state $\Pr(e|s)$ of all the qubits is (nearly) fully mixed. Thus, performing a simultaneous hard decision will not yield a consistent error. However, the BP algorithm can provide the degenerate directions the checks need to \emph{move} in, to annihilate each other. This is our key observation, instead of asking a local algorithm to provide globally consistent solutions, we merely demand \emph{locally} consistent moves. This also brings out a necessary change in perspective going from the rep code to the toric code: the feedback rule can no longer be bit-centric owing to the exponential degeneracy; instead, it must be \emph{check-centric}, wherein an active check determines whether to move or not.

\begin{algorithm}[t]
\caption{Local Inference Machine}
\label{alg:localmachine}
\begin{algorithmic}[1]

\Require Parity-check matrix $H_X$, initial syndrome $S(0)$
\State Initialize message state $m(0):=0$ and independent unit-rate Poisson clocks $\{N_c(t)\}_{c\in\mathcal C}$ 

\Statex
\While{decoding is active}

    \State Evolve the continuous-time BP dynamics
    \[
        \tau \frac{dm}{dt}
        =
        F_{\mathrm{BP}}\!\left(m;S(t)\right)-m .
    \]

    \State Let $c$ be the next check whose Poisson clock rings

    \If{$c \in S(t)$}
    \If{\textsc{Evidence}$(c) \ge 1/d_c$}    
    \State
        \[
            f_c
            \gets
            \textsc{LocalMove}\!\left(c,\xi(t)\right)
        \]
        \State Apply the local correction $e(t^+) \gets e(t) \oplus f_c$
        \State Update the syndrome
        \[
            s(t^+)
            \gets
            s(t)\oplus H_X f_c .
        \]
    \EndIf
    \EndIf

\EndWhile

\end{algorithmic}
\end{algorithm}

\emph{Continuous inference machines.---}We now promote the inference dynamics to continuous time, and shift the task of local decisions from data processors onto the check processors. The processors have local states $\xi$ exactly as before, except now the BP equations are carried out in continuous time (see~\cite{hematiDynamicsPerformanceAnalysis2006} for a similar analog implementation in a different context):  
\begin{equation}
     \tau \frac{dm}{dt}
        = F_{\mathrm{BP}}\!\left(m;s(t)\right)-m .
\end{equation}
where $m$ is the combined state of the messages, and $F_{\mathrm{BP}}$ is the BP update, and $\tau$ is the relaxation time constant. Then, each syndrome processor is equipped with a local Poisson clock, independent of other processors. This provides a mechanism for local feedback: whenever a processor's clock rings, it attempts to \emph{move} using the local beliefs of its neighboring checks as direction. This architecture is summarized in \cref{alg:localmachine}.

Let $e(0)$ denote the initial error on the device, resulting in the syndrome $s(0)$. Then, each of the check processors $c$ loads in $s_c(0)$, and the dynamics begins. With the continuous inference dynamics running, local moves proceed as follows. Let $N_c(t)$ denote the local clock of processor $c\in\cC$. When the clock on check $c$ rings, i.e., $N_c(t) = 1$, it is allowed to perform a \textsc{Localmove} iff,
\begin{enumerate}
    \item[(i)] it is active, $s_c(t) = 1$, and, 
    \item[(ii)] it has evidence, i.e., $\textsc{Evidence}(c) \ge 1/d_c$ where 
    \[
  \textsc{Evidence}(c)  :=  \max_{v\in\cN(c)}\Pr_{\BP}(e_v|s)
    \]
    and $d_c$ is the degree of check $c$.
\end{enumerate}
The evidence criteria comes from a simple union bound: for a degree-$d_c$ active check $c\in S$, we have $\max_{v\in \cN(c)} \Pr(e_v|s) \geq 1/d_c$. This serves as a confidence metric: the check does not move until the surrounding local beliefs have accumulated sufficient signal. The action of the \textsc{Localmove} at an active check $c$ is to flip the qubit $v\in \cN(c)$, where 
\begin{equation}
    v \in \argmax_{w\in \cN(c)} \Pr_{\BP}(e_w|s),
\end{equation}
breaking ties randomly. Then, the feedback at time $t$ is $f(t) = \1_{v}$ (note that almost surely, only one clock rings at $t$). One then writes the cumulative feedback upto time $t$ and residual error at time $t$, respectively:
\begin{equation}
    z(t) := \bigoplus_{u < t}f(u), \qquad e(t) := e(0) \oplus z(t).
\end{equation}
We define the clearance time to be:
\begin{equation}
    t_{\clr} := \min\{t~:~ H_X[e(t)] = 0\}
\end{equation}
and $t_{\clr} := \infty$ if the syndrome does not clear.

We begin with toric codes: qubits are placed on the edges of a square lattice, and $X-$checks are placed on the vertices. Herein, errors create pairs of point-like excitations on the vertices, and decoding can be viewed as the task of pairing and annihilating these defects. The \textsc{localmove} provides a local realization of this pairing picture: when an active defect attempts to move, the locally inferred marginals determine the direction, progressively bringing defects together until they annihilate. In this sense, the dynamics replaces a global matching computation by a sequence of local and belief-guided moves. Given faithful access to local marginals, one expects that an isolated error-cluster of diameter $w$ gets cleared by this dynamics in $O(w)$ time, except with a probability atmost $O(e^{-w})$. We study this rigorously in a companion work~\cite{rigorousinfmachine}.

For BP, the picture is complicated by the presence of short loops in the Tanner graph. This is due to the known ``self-feedback" issue on loopy graphs, wherein the signal from a distant defect is obfuscated by the feedback around a loop. For a Tanner graph with girth~$g$, the local computation remains tree-like upto distance $O(g)$. As a result, we expect the cluster annihilation dynamics to hold for clusters upto diameter $O(g)$. Since threshold performance requires clearing out clusters only upto diameter $O(\log d)$~\cite{lakeFastOfflineDecoding2025,bravyiAnalyticNumericalDemonstration2013}, this suggests that the continuous time machine will operate well upto code sizes $d \leq e^{O(g)}$. As we show below, our numerical results probe precisely this finite-size regime and exhibit robust threshold-like behavior.

\begin{figure}
    \centering
    \includegraphics[width=1.0\linewidth]{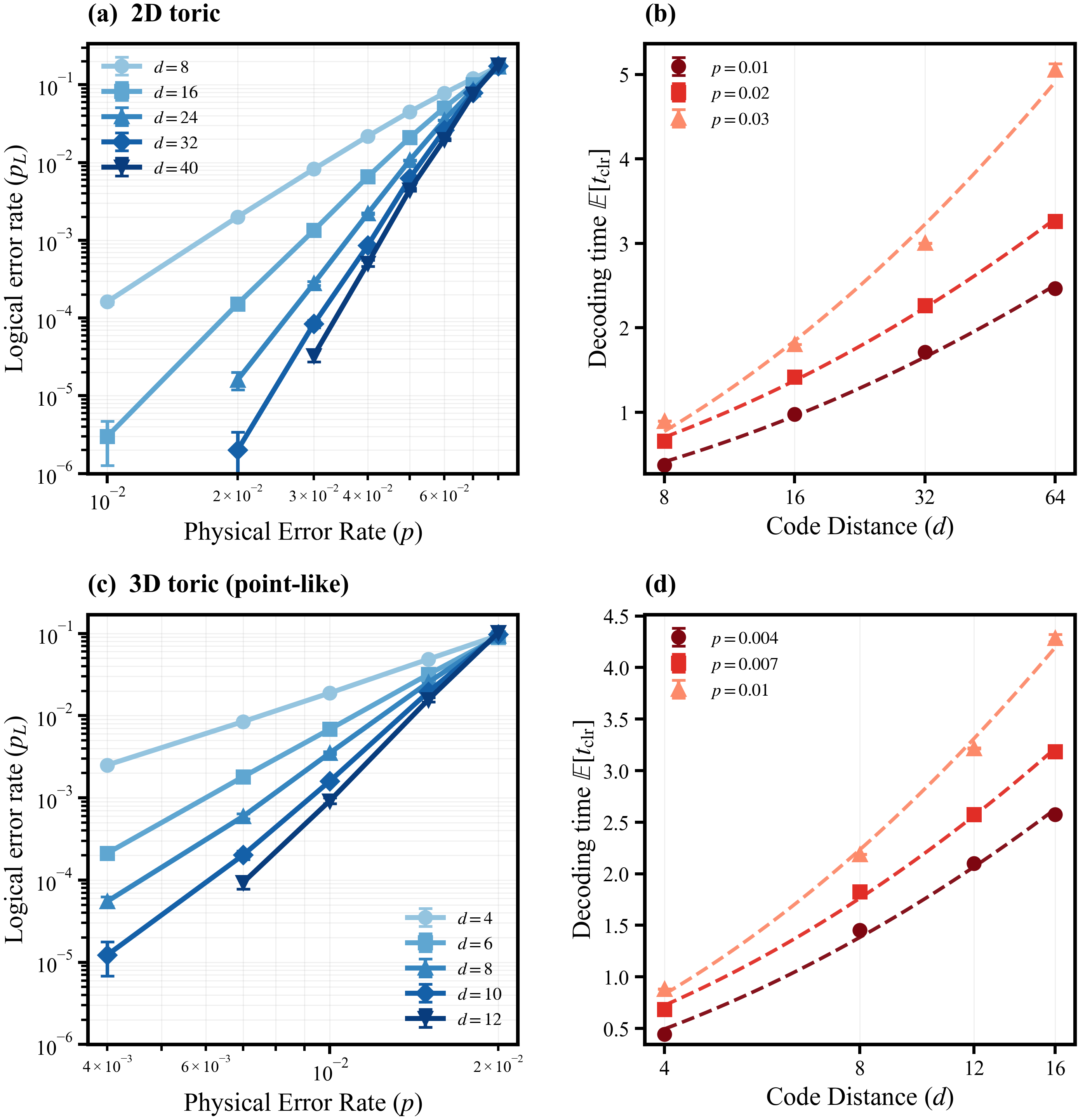}
    \caption{\textbf{Local inference on toric codes.} Logical error rates and expected clearing time on the (a-b) two-dimensional toric code and (c-d) point-like sector of the three-dimensional toric code. The fits are of the form $\E[t_{\clr}]= a_p (\log d)^2 + b_p$ for $p-$dependent constants $a_p,b_p$.}
    \label{fig:toric}
\end{figure}

The performance of the continuous-time inference machine on the two-dimensional toric code is shown in \cref{fig:toric} (using $\tau=0.1$ in the unit of rate-one clocks in all numerics hereon). In panel (a), the finite-size crossings are consistent with a decoding threshold of approximately $p_{\mathrm{th}}\simeq 8\%.$ The expected decoding time is shown in panel (b), with a fit of $a_p \cdot (\log d)^2 + b_p$. We now consider the 3D toric code, in the so-called ``point-like" sector: qubits are placed on the edges of a 3D cubic lattice, with $X-$checks on the vertices. Using effectively now a 3D layout of the local processors, we study the performance of the machine in \cref{fig:toric}(c-d). The resulting logical error rates are shown in \cref{fig:toric}(c), where the finite-size crossings are consistent with a threshold of approximately $p_{\mathrm{th}}\simeq 2\%.$ The expected clearing times are shown in panel (d) with a fit of $a_p \cdot (\log d)^2 + b_p$.

\begin{figure}
    \centering
    \includegraphics[width=1.0\linewidth]{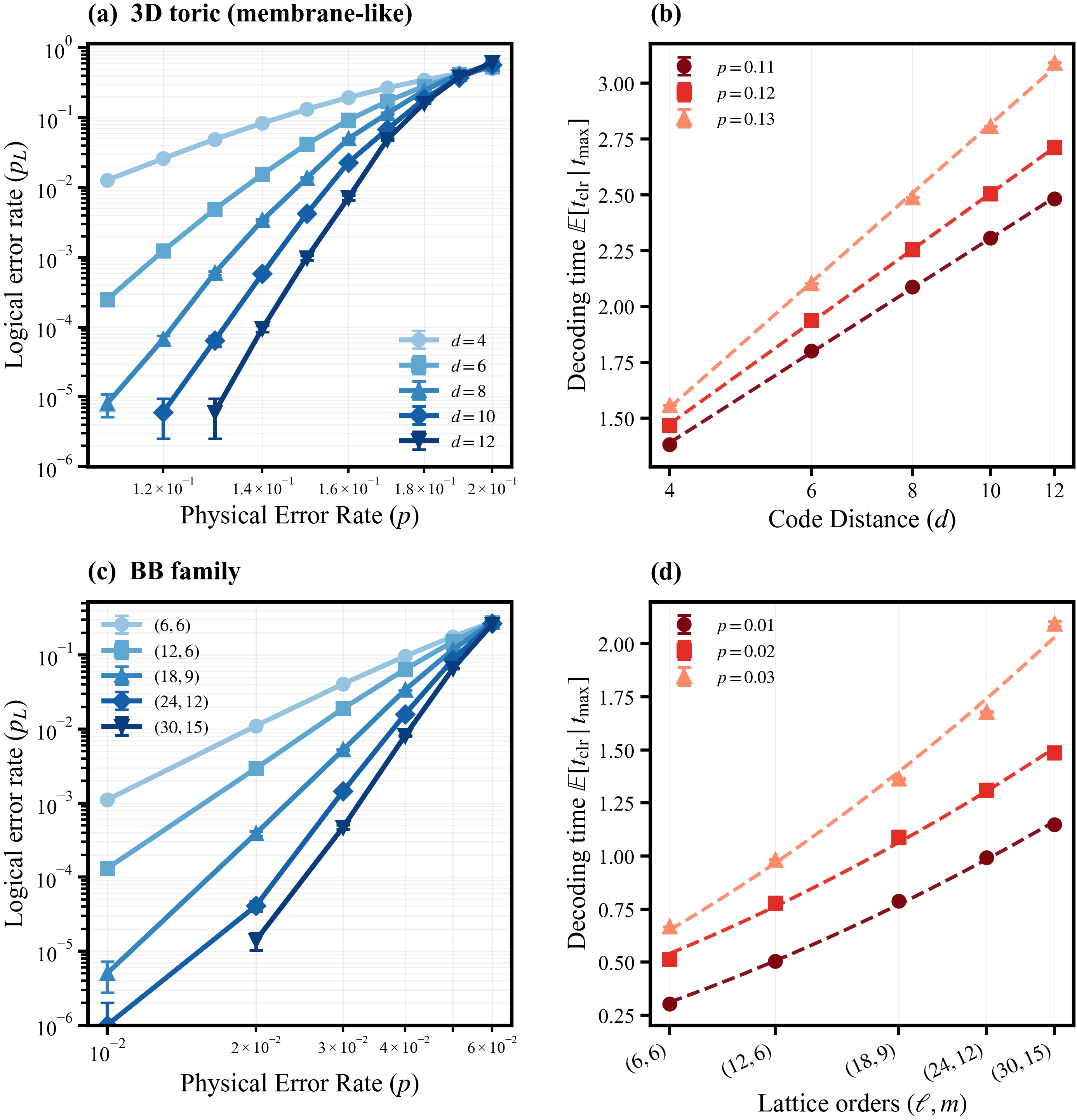}
    \caption{\textbf{Local inference on LDPC codes.} Logical error rates and expected clearing time (for convergence in $t \leq t_{\max} = 1000$) for the (a-b) membrane-like sector of the three-dimensional toric code with a fit $\E[t_{\clr}]= a_p (\log d) + b_p$, and (c-d) a family of bivariate-bicycle codes including the Gross code with a fit $\E[t_{\clr}]= a_p (\log \sqrt{2\ell m})^2 + b_p$. }
    \label{fig:ldpc}
\end{figure}

We next turn to quantum LDPC codes whose decoding problems are not naturally reducible to pair-wise defect matching. We focus on codes with geometrically local stabilizers in two or three dimensions, but for which a single-qubit error can excite more than two checks, so that the variable degree satisfies $\Delta_v>2$. Somewhat surprisingly, we find that the same \textsc{localmove} remains effective. As an alternative, one can also consider a local move consisting of a \emph{multi-qubit} flip, akin to the small-set flip~\cite{leverrierQuantumExpanderCodes2015} algorithm however guided by the local beliefs. We find similar performance for both, and report the single-qubit moves in this work.

We first consider the 3D toric code in the ``membrane-like'' sector, where the qubits are again on the edges, with checks now on the faces of the 3D lattice. In this sector, each data qubit participates in $\Delta_v=4$ relevant parity checks, where a membrane-like error creates a loop-like violation of parity-checks. We show in~\cref{fig:ldpc}(a) the resulting logical error rates for code distances \(d\in\{4,6,8,10,12\}\). The crossings are consistent with a finite threshold of approximately $p_{\mathrm{th}}\simeq 19\%.$ Unlike the point-defect sectors of the toric code, the syndrome need not clear for every realization in these more general codes. We therefore impose a maximum decoding time $t_{\max}$. A sample is declared nonconvergent if its syndrome has not cleared by $t_{\max}$, and such events are counted as decoding failures when computing the logical error rate. To characterize the convergence time of successful runs, we report the mean clearing time conditioned on convergence within the allotted window, abbreviated $\E[t_{\clr}|t_{\max}]$. This is shown for the membrane sector of the 3D toric code in ~\cref{fig:ldpc}(b), with a plotted fit of $a_p \log d + b_p$.

As a more general test of the Tanner-graph formulation, we next consider a family of bivariate-bicycle (BB) codes~\cite{bravyiHighthresholdLowoverheadFaulttolerant2024b}. A BB code on an $(\ell,m)$ torus consists of qubits on the edges of a torus, with checks on vertices and faces. The check connectivity, however, is specified by two polynomials over $\Ftwo[x,y]/(x^\ell-1,y^m-1)$\footnote{As a simple comparison, the polynomials for the 2D toric code can be written as $A=1+x, B=1+y$}, see~\cite{sahayMatchingDecoderBivariate2026} for a nice exposition. We fix the polynomial pair used for the Gross code,
$$
A=x^3+y+y^2,\qquad
B=y^3+x+x^2,
$$

with corresponding CSS check matrices $H_X=[A\mid B]$ and $H_Z=[B^\top\mid A^\top]$. We retain the same local polynomial structure while increasing the torus dimensions, considering $(\ell,m)\in\{(6,6),(12,6),(18,9),(24,12),(30,15)\}$. With the fixed polynomial, an increasing sequence of such codes can be mapped to a stack of toric codes~\cite{haahClassificationTranslationInvariant2021,bombinStructure2DTopological2014,bombinUniversalTopologicalPhase2012a,wang2026decoupling2dtranslationinvarianttopological}, and we expect the sequence of distances to be $O(\sqrt {\ell m}$). The first two members are respectively the $[[72,12,6]]$ half-Gross code and the $[[144,12,12]]$ Gross code~\cite{bravyiHighthresholdLowoverheadFaulttolerant2024b}. The logical error rates for this sequence are shown in \cref{fig:ldpc}(c), where the finite-size crossings indicate a threshold near $p_{\mathrm{th}}\simeq 6\%$. We show the expected clearing time, again with a cutoff of $t_{\max} = 1000$ in \cref{fig:ldpc}(d) with a fit of $a_p (\log \sqrt{2\ell m})^2 + b_p$, where we use the square-root of the blocklength $n=2\ell m$ as a proxy for distance.

\emph{Discussion.---}We have constructed a distributed model of inference comprising of local measurement, computation and feedback using a series of classical processors laid out in the geometry of the Tanner graph of a quantum CSS code. Taking any code already exhibiting a threshold under ordinary BP, we show how to construct an autonomous inference machine which preserves the threshold. Turning to quantum codes not decodable under ordinary BP, we show that, under a suitable change of viewpoint from demanding a globally consistent solution onto only a locally optimal direction, one can construct local inference machines applicable to quantum LDPC codes. We tested this machine on the two-dimensional and three-dimensional toric codes, as well as a family of bivariate-bicycle codes and numerically showed threshold performance.

The results presented here concern \emph{offline} decoding: an error is sampled at $t=0$, after which the inference dynamics evolves in the absence of further noise. A natural next step is to extend the framework to \emph{online} decoding, in which new faults and syndrome information continuously arrive while inference and feedback proceed in real time. A natural route is to promote the Tanner graph to a spacetime decoding graph and run the local inference dynamics within overlapping spacetime windows, whose extent need only capture the relevant growing error clusters~\cite{skoricParallelWindowDecoding2023,tanScalableSurfaceCodeDecoders2023,lakeLocalActiveError2025a}. The intrinsically distributed architecture considered here appears particularly well suited to this setting, since inference and feedback can proceed continuously without waiting for a global decoding decision.

A second direction is to improve the local inference dynamics itself. Throughout this work we use standard BP without code-specific tuning, leaving considerable freedom in how local posterior information is generated and propagated. Many modified message-passing schemes include, e.g., memory, damping, different update schedules, or decimation~\cite{kuoExploitingDegeneracyBelief2022a,oldGeneralizedBeliefPropagation2023b,yaoBeliefPropagationDecoding2024b,mullerImprovedBeliefPropagation2025a} resulting in improved performance. Incorporating such methods within the local feedback architecture developed here may improve both the threshold and the convergence time, while preserving the essential distributed character.

Lastly, the continuous evolution of the syndrome in a dynamic inference network can fit naturally into a hierarchical decoding architecture. In such schemes, a fast local decoder quickly removes simple error configurations, followed up by a slower global decoder invoked only on the residual syndrome~\cite{delfosseHierarchicalDecodingReduce2020,smithLocalPredecoderReduce2023,chamberlandTechniquesCombiningFast2023,cauneBeliefPropagationPartial2023}. The inference dynamics studied here might be a natural candidate for the first stage of such an approach, ideally resulting to a call for the global stage only for atypical errors.

\emph{Acknowledgements.---}We thank Sarang Gopalakrishnan and Frank Zhang for helpful discussions. This work was supported by a Brown Investigator Award, a program of the Brown Institute for Basic Sciences at the California Institute of Technology.

\emph{AI Statement.---}Large language models (in particular, GPT 5.6 Sol and Claude Opus 5.5) were used to assist with the numerical codes, literature search, and technical proofs.


\bibliographystyle{apsrev4-2}
\bibliography{refs}


\clearpage
\onecolumngrid
\makeatletter
\let\@outputdblcol\@outputpage
\makeatother
\setcounter{section}{0}
\setcounter{subsection}{0}
\setcounter{subsubsection}{0}
\setcounter{equation}{0}
\setcounter{figure}{0}
\setcounter{table}{0}
\setcounter{footnote}{0}
\renewcommand{\thesection}{S\arabic{section}}
\renewcommand{\theHsection}{S\arabic{section}}
\renewcommand{\thesubsection}{\thesection.\arabic{subsection}}
\renewcommand{\theequation}{S\arabic{equation}}
\renewcommand{\thefigure}{S\arabic{figure}}
\renewcommand{\thetable}{S\arabic{table}}
\renewcommand{\theHequation}{S\arabic{equation}}
\providecommand{\theHfigure}{}\renewcommand{\theHfigure}{S\arabic{figure}}
\providecommand{\theHtable}{}\renewcommand{\theHtable}{S\arabic{table}}
\section*{Supplementary Material}
\listofsmentries

\smsection{Tanner Graphs, Tensor Networks and Belief Propagation}

\begin{defn}[Tanner Graph]
    Let $H_X\in \mathbb F_2^{m_X\times n}$ be the parity-check matrix of concern.
Associated to $H_X$ is its Tanner graph $G_H=(\cV\sqcup \cC,\cE),$ which is a bipartite graph with data vertices $\cV=\{1,\dots,n\}$ and syndrome, or check, vertices $\cC=\{1,\dots,m\}.$ There is an edge between data vertex \(v\in\cV\) and syndrome vertex
\(c\in\cC\) iff $H_{cv}=1.$ 
\end{defn}
For \(c\in\cC\), the neighborhood \(\cN(c)\subseteq \cV\)
is the set of data bits participating in check \(c\). Similarly, for
\(v\in\cV\), the neighborhood \(\cN(v)\subseteq \cC\)
is the set of checks incident on bit \(v\). We denote, 
\begin{align}
    \text{(Check degree) } \Delta_c &:= \max\{|\cN(c)| : c\in \cC\} \\ 
    \text{(Variable degree) } \Delta_v &:= \max\{|\cN(v)| : v\in \cV\} 
\end{align}
We say that the code is LDPC if $\Delta_c, \Delta_v = O(1)$ and do not grow with the size of the code.

We write the error configuration as $ e=(e_v)_{v\in\cV}\in \mathbb F_2^n$
and assume an i.i.d. Bernoulli prior
\[
    \Pr(e)
    =
    \prod_{v\in\cV} q^{e_v}(1-q)^{1-e_v}.
\]
The syndrome associated with an error configuration is $ s=H_X(e).$ Equivalently, for each individual check \(c\in\cC\),
\[
    s_c
    =
    \bigoplus_{i\in\cN(c)} e_v.
\]
\smsubsection{The inference tensor network}
Given an observed syndrome $S$, the posterior distribution over errors is
\begin{equation}
    \Pr(e\mid s)
    =
    \frac{\Pr(e,s)}{\Pr(s)}
    =
    \frac{\Pr(s\mid e)\Pr(e)}{\Pr(s)}.
\end{equation}
Since the syndrome is a deterministic function of the error, we have $\Pr(s\mid e) = \mathbf 1[H_X e = s].$ Therefore under an i.i.d. prior with single qubit error probability $q$,
\begin{equation}
    \Pr(e\mid s)
    =
    \frac{1}{\Pr(S)}
    \mathbf 1[H_X e = s]
    \prod_{v\in\cV} q^{e_v}(1-q)^{1-e_v}.
\end{equation}
Equivalently, writing the syndrome constraints locally,
\begin{equation}
    \Pr(e\mid s)
    =
    \frac{1}{Z(s)}
    \prod_{c\in\cC}
    \mathbf 1\!\left[
        \bigoplus_{w\in\cN(c)} e_w = s_c
    \right]
    \prod_{v\in\cV} q^{e_v}(1-q)^{1-e_v},
\end{equation}
where
\begin{equation}
    Z(s)=\Pr(s)
\end{equation}
is the syndrome partition function. We now define a tensor network whose full
contraction computes this normalization.

\begin{defn}[Inference Tensor Network]
Fix a parity-check matrix $H_X\in\mathbb F_2^{m_X\times n}$, a decoder prior $q\in[0,1]$, and a syndrome $s\in\mathbb F_2^m$. Let $\cT=(\cV\sqcup\cC,\cE)$ be the Tanner graph of $H_X$. The associated inference tensor network
$\mathsf{TN}(H_X,s)$ is defined as follows. For each Tanner-graph edge $\{c,v\}\in\cE$, we introduce a binary virtual index $x_{cv}\in\{0,1\}.$ For each syndrome vertex $c\in\cC$, define a syndrome tensor imposing the parity constraint:
\[
    T^{(c,s_c)}_{\{x_{cv}\}_{v\in\cN(c)}}
    :=
    \mathbf 1\!\left[
        \bigoplus_{v\in\cN(c)} x_{cv}=s_c
    \right].
\]
For each data vertex \(v\in\cV\), define a data tensor
\[
    D^{(v)}_{\{x_{cv}\}_{c\in\cN(v)}}
    :=
    \sum_{e_v\in\{0,1\}}
    \operatorname{pr}_v(e_v)
    \prod_{c\in\cN(v)}
    \delta_{x_{cv},e_v},
\]
where
\[
    \operatorname{pr}_v(e_v)
    :=
    q^{e_v}(1-q)^{1-e_v}.
\]

\end{defn}

The two types of tensors present in the network can be drawn as follows. The check tensors enforce the parity of the corresponding syndrome: 
\begin{align}
    \ParityPic{s_c}
    \quad &\equiv \quad
    T^{(c,s_c)}_{\{x_{cv}\}_{v\in\cN(c)}}
    \notag\\[1mm]
    &:= 
    \mathbf{1}\!\left[
        \bigoplus_{v\in\cN(c)} x_{cv}=s_c
    \right],
    \qquad c\in\cC.
    \label{eq:check-tensor-def}
\end{align}

The data tensors enumerate the possible errors with the corresponding prior:
\begin{align}
    \DataPic
    \quad &\equiv \quad
     D^{(v)}_{\{x_{cv}\}_{c\in\cN(v)}}
    \notag\\[1mm]
    &:=
    \sum_{e_v\in\{0,1\}}
    \operatorname{pr}_v(e_v)
    \prod_{c\in\cN(v)} \delta_{x_{cv},e_v},
    \qquad v\in\cV.
    \label{eq:data-tensor-def}
\end{align}

where the prior tensor is given as,
\begin{align}
    \PriorPic := \begin{pmatrix}
        1-q\\[1mm]
        q
    \end{pmatrix},
    \qquad v\in\cV.
    \label{eq:prior-tensor-def}
\end{align}

\begin{prop}[Contraction equals the syndrome probability]
Let \(\mathsf{TN}(H_X,s)\) be the inference tensor network. Its contraction value equals
\begin{equation}
    Z_{\mathsf{TN}}(s)=\Pr(s),
\end{equation}
for the i.i.d. Bernoulli error prior
\[
    \Pr(e)
    =
    \prod_{v\in\cV} q^{e_v}(1-q)^{1-e_v}.
\]
\end{prop}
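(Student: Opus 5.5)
The plan is to write out the full contraction explicitly and then collapse it using the Kronecker deltas inside the data tensors. By definition, contracting $\mathsf{TN}(H_X,s)$ means summing over all virtual indices $\{x_{cv}\}_{\{c,v\}\in\cE}$ the product of all check tensors and all data tensors:
\[
Z_{\mathsf{TN}}(s)=\sum_{\{x_{cv}\}}\prod_{c\in\cC}T^{(c,s_c)}_{\{x_{cv}\}_{v\in\cN(c)}}\prod_{v\in\cV}D^{(v)}_{\{x_{cv}\}_{c\in\cN(v)}} .
\]
The first step is to substitute the definition of each $D^{(v)}$. This introduces an auxiliary sum over $e_v\in\{0,1\}$ for every $v$, and these sums can be pulled to the front:
\[
Z_{\mathsf{TN}}(s)=\sum_{e\in\Ftwo^n}\prod_{v}\operatorname{pr}_v(e_v)\sum_{\{x_{cv}\}}\prod_{\{c,v\}\in\cE}\delta_{x_{cv},e_v}\prod_{c}T^{(c,s_c)} .
\]

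The second step, which is the only point needing care, is the bookkeeping. Each edge index $x_{cv}$ appears in exactly one data tensor, namely that of $v$, and in exactly one check tensor, namely that of $c$. This holds because the Tanner graph is bipartite and has one virtual index per edge. Consequently the delta product forces $x_{cv}=e_v$ for every edge, so the inner sum over $\{x_{cv}\}$ contains exactly one nonzero term. On that term, each check tensor evaluates to $\mathbf 1[\bigoplus_{v\in\cN(c)}e_v=s_c]$.

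For the last step, note that the product over $c$ of these indicators equals $\mathbf 1[H_Xe=s]$. This gives
\[
Z_{\mathsf{TN}}(s)=\sum_{e:\,H_Xe=s}\prod_v q^{e_v}(1-q)^{1-e_v}=\sum_{e}\Pr(e)\,\Pr(s\mid e)=\Pr(s),
\]
which is the marginal of the deterministic syndrome map under the i.i.d. Bernoulli prior, in agreement with the partition function written in the Preliminaries. No real obstacle is expected. The only subtlety is the convention for a data vertex of degree zero: there the data tensor should be read as the scalar $\sum_{e_v}\operatorname{pr}_v(e_v)=1$, which still matches the marginalization over that bit.
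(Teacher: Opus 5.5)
Your proposal is correct and follows essentially the same route as the paper. You substitute the tensor definitions, use the deltas in the data tensors to collapse each virtual index $x_{cv}$ onto $e_v$, and recognize the resulting product of check indicators as $\mathbf 1[H_Xe=s]$ weighted by the Bernoulli prior, which sums to $\Pr(s)$. Your explicit edge bookkeeping and the remark on degree-zero data vertices are correct refinements that the paper leaves implicit.
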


\begin{proof}
Substituting the definition of the tensors yields for the contraction,
\begin{align}
    Z_{\mathsf{TN}}(s)
    &=
    \sum_{\{x_{cv}\}}
    \prod_{c\in\cC}
    \mathbf 1\!\left[
        \bigoplus_{v\in\cN(c)}x_{cv}=s_c
    \right]
    \prod_{v\in\cV}
    \left(
        \sum_{e_v\in\{0,1\}}
        \operatorname{pr}_v(e_v)
        \prod_{c\in\cN(v)}\delta_{x_{cv},e_v}
    \right).
\end{align}
The delta tensors enforce
\[
    x_{cv}=e_v
    \qquad
    \text{for all } c\in\cN(v).
\]
Therefore the sum over virtual indices reduces to a sum over one error bit
\(e_v\) for each data vertex \(i\). Hence
\begin{align}
    Z_{\mathsf{TN}}(s)
    &=
    \sum_{e\in\mathbb F_2^n}
    \prod_{c\in\cC}
    \mathbf 1\!\left[
        \bigoplus_{v\in\cN(c)}e_v=s_c
    \right]
    \prod_{w\in\cV}\operatorname{pr}_w(e_w)
    \\
    &=
    \sum_{e\in\mathbb F_2^n}
    \mathbf 1[H_X e=s]\Pr(e)
    \\
    &=
    \Pr(S).
\end{align}
\end{proof}
\smsubsection{Belief propagation}
Belief propagation is a message-passing algorithm defined on top of this network, wherein each edge $\{c,v\}\in \cE$ is assigned two \emph{messages}, $\mu_{c\to v}, \mu_{v\to c}\in \R^2$, and the message-passing procedure consists of performing the following two updates,
\[
\CheckUpdatePic{S_c}\longrightarrow\mu_{c\to v}
\quad
\DataUpdatePic\longrightarrow\mu_{v\to c}
\]
where in each case the leg left open on the right is the free index carried by
the outgoing message, and the remaining neighbours are contracted against their
incoming messages. Denoting the combined message state as $\mu$ and the update map as $F_{\BP}$, the iteration is compactly written as $\mu^{(i+1)} = F_{\BP}(\mu^{(i)})$. Given a set of messages $\mu$, one obtains an approximation to the local bit marginals, 
\[
\Pr(e_v|s)  \approx \Pr_{\BP}(e_v|s) \quad := \DataMarginalPic
\]
We can equivalently write the procedure of belief propagation in terms of scalar log likelihoods. For each edge $\{c,v\}\in \cE$ we define the LLR space messages, 
\begin{equation}
    \underbrace{m_{c\to v} := \log\frac{\mu_{c\to v}(1)}{\mu_{c\to v}(0)}, \qquad m_{v\to c} := \log\frac{\mu_{v\to c}(1)}{\mu_{v\to c}(0)}}_{\text{LLR space messages}}, \qquad  \underbrace{\pi_v := \log{\frac{q}{1-q}}}_{\text{Prior LLRs}}
\end{equation}

In this formulation, the discrete BP equations read, 
\begin{align}
    \text{(Bit to check) } m_{v \to c}^{(i+1)} &:= \pi_v + \sum_{c' \in \cN(v) / c}m_{c' \to v}^{(i)} \\  
    \text{(Check to bit) } m_{c\to v}^{(i+1)} &:= 2(-1)^{s_c + d_c} \atanh\left(\prod_{w\in\cN(c)/v}\tanh\left[\frac{m_{w\to c}}{2}\right]\right)
\end{align}

\smsection{Autonomous machines}
We now analyze the discrete time local inference machine operating on the repetition code. All of the machinery here can be straightforwardly generalized onto a code where ordinary BP achieves a threshold, with appropriate modifications. We focus on the repetition code, state the algorithm precisely, and prove it's optimal performance. 
\smsubsection{Setup and algorithm}
We define the following parameters for the rest of this section: 
\begin{enumerate}
    \item[(1)] Initial error on the device, $e^{(0)}$
    \item[(2)] $e^{(i)}$ be the residual error on the device \emph{after} iteration $i$ is complete. 
    \item[(3)] $f^{(i)}$ be the feedback applied \emph{at the} step $i$ 
    \item[(4)] $z^{(i)}$ be the cumulative correction up until and including iteration $i$. We initialize $z^{(0)} = \bm 0$.
    \item[(5)] $s^{(i)} = H_X e^{(i)}$ be the syndrome stored in the network at the end of iteration $i$. 
\end{enumerate}

These obey the following elementary relations:
\begin{equation}
    z^{(i)} = \bigoplus_{j=1}^{i} f^{(j)}, \qquad e^{(i)} = e^{(0)} \oplus z^{(i)}.
\end{equation}
Moreover, for a check $c\in \cC$, with $\cN(c) = \{v,w\}$, 
\begin{equation}
s_c^{(i)} = s_c^{(0)} \oplus z_v^{(i)} \oplus z_w^{(i)}
\end{equation}
We choose a decoder prior $0 < q <1/2$ and set the initial messages and prior LLR to, 
\[
    \pi^{(0)} = \log{\frac{q}{1-q}}, \qquad m^{(0)}_{v\to c} = m^{(0)}_{c\to v} = 0.
\]
Every iteration $i$ consists of the following updates. 
\begin{enumerate}
    \item[(1)] \textbf{Check to bit message} For a check $c$ with $\cN(c) = \{v,w\}$, we have 
    \begin{equation}
        \hat{m}^{(i)}_{c\to v} \leftarrow (-1)^{s^{(i-1)}_c}m^{(i-1)}_{w\to c}
    \end{equation} 
    \item[(2)] \textbf{Bit to check message} For a bit $v\in \cV$ connected to a check $c\in\cC$ we have, 
    \begin{equation}
        \hat{m}^{(i)}_{v\to c} \leftarrow \pi^{(i-1)}_v + \sum_{c'\in \cN(v)/c}m^{(i-1)}_{c'\to v}
    \end{equation}
    \item[(3)] \textbf{Current beliefs}
    \begin{equation}
        \hat{L}_v^{(i)} := \pi_v^{(i-1)} + \sum_{c\in\cN(v)} m^{(i-1)}_{c\to v}
    \end{equation}
    Here the hats over $m$ and $L$ denote the variables \emph{before} the application of the local rule. 
    \item[(4)] The \textbf{Local rule} operates as follows:
    \begin{align}
        \text{(Feedback) } f^{(i)}_v &\leftarrow \1[\hat{L}_v^{(i)} > 0] \\ 
        \text{(Syndrome update) } s_c^{(i)}  &\leftarrow s_c^{(i-1)} \oplus \bigoplus_{v\in \cN(c)} f^{(i)}_v \\ 
        \text{(LLR switch) } \pi^{(i)}_v &\leftarrow (-1)^
        {f^{(i)}_v} \pi^{(i-1)}_v \\ 
        m_{v\to c}^{(i)} &\leftarrow (-1)^{f^{(i)}_v} \hat{m}_{v\to c}^{(i)} \\ 
        m_{c\to v}^{(i)} &\leftarrow (-1)^{f^{(i)}_v} \hat{m}_{c\to v}^{(i)} 
    \end{align}
\end{enumerate}

\smsubsection{Mapping to ordinary BP}
Let us first study a property of ``ordinary'' BP dynamics, which holds a fixed syndrome and runs until convergence. Denote the messages, priors, and bit beliefs of this dynamics with a bar as $\bar{m}$, $\bar{\pi}$, and $\bar{L}$ respectively. Consider a vector $g\in \Ftwo^n$, and define its gauge action on the BP variables as: 
\begin{align}
    (G_g \pi)_v &:=  (-1)^{g_v}\pi_v \\ 
    (G_g m)_{v\to c} &:=  (-1)^{g_v}m_{v\to c} \\ 
    (G_g m)_{c\to v} &:=  (-1)^{g_v}m_{c\to v}  
\end{align}
We first show that the ordinary BP update is gauge covariant under this transformation. 
\begin{proposition}[Gauge covariance of BP]
    Let $F_{\BP}(m; s; \pi)$ be the BP update as a function of the messages $m$, syndrome $s$, and priors $\pi$. Then, 
    \[
    F_{\BP}(G_g m; s \oplus H_X g; G_g\pi) = G_g F_{\BP}(m; s; \pi)
    \]
\end{proposition}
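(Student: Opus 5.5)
The plan is to verify the identity component by component. $F_{\BP}$ is the collection of the two local maps in \cref{eq:BP-LLR1,eq:BP-LLR2}, and $G_g$ acts diagonally, edge by edge. So it suffices to show that each outgoing message computed from the gauged inputs $(G_g m, s\oplus H_X g, G_g\pi)$ equals $(-1)^{g_v}$ times the message computed from $(m,s,\pi)$. Here $v$ is the data endpoint of the edge carrying the message. The ungauged tuple $(m,s,\pi)$ is fixed throughout, and I will handle the two update types separately.

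\emph{Bit-to-check.} For an edge $\{v,c\}$, the gauged update is $(-1)^{g_v}\pi_v+\sum_{c'\in\cN(v)/c}(-1)^{g_v}m_{c'\to v}$. Every term in this sum lives at the same data vertex $v$, so each carries the common factor $(-1)^{g_v}$. Factoring it out gives $(-1)^{g_v}\,[F_{\BP}(m;s;\pi)]_{v\to c}$, as required. The syndrome does not enter this update, so the shift $s\mapsto s\oplus H_X g$ plays no role here.

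\emph{Check-to-bit.} For an edge $\{c,v\}$, the incoming messages are $(-1)^{g_w}m_{w\to c}$ for $w\in\cN(c)/v$. Since $\tanh$ is odd, the product becomes
\begin{equation}
\prod_{w\in\cN(c)/v}\tanh\!\left[\tfrac{(-1)^{g_w}m_{w\to c}}{2}\right]=(-1)^{\sum_{w\in\cN(c)/v}g_w}\prod_{w\in\cN(c)/v}\tanh\!\left[\tfrac{m_{w\to c}}{2}\right].
\end{equation}
Since $\atanh$ is also odd, this sign passes outside the $\atanh$. The gauged syndrome bit is $s_c\oplus (H_Xg)_c=s_c\oplus\bigoplus_{w\in\cN(c)}g_w$, and $d_c$ does not depend on $g$. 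Therefore the prefactor becomes $(-1)^{s_c+d_c}(-1)^{\sum_{w\in\cN(c)}g_w}$. Here I use that $(-1)^{a\oplus b}=(-1)^{a+b}$ for bits, so the mod-2 reduction is harmless.

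Multiplying the two signs, the contributions of every $w\neq v$ appear twice and cancel. Only $(-1)^{g_v}$ survives, which gives $(-1)^{g_v}[F_{\BP}(m;s;\pi)]_{c\to v}$. The same argument also shows that the beliefs $L_v$ transform as $(-1)^{g_v}L_v$, which will be needed later when the local machine is mapped to ordinary BP.

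I do not expect a real obstacle; the only delicate point is the sign bookkeeping in the check update. The incoming product runs over $\cN(c)/v$, whereas the syndrome shift $(H_Xg)_c$ runs over all of $\cN(c)$. Precisely this mismatch leaves the factor $(-1)^{g_v}$ attached to the outgoing edge. Finally, I would note that the degenerate cases where messages are $\pm\infty$ (so that $\tanh=\pm1$) are covered by the same oddness argument, interpreted in the extended reals.
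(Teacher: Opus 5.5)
Your proposal is correct and follows essentially the same argument as the paper. In both, the common factor $(-1)^{g_v}$ is pulled out of the bit-to-check sum. In the check-to-bit update, the oddness of $\tanh$ and its inverse produces a sign over $\mathcal{N}(c)\setminus\{v\}$, which combines with the syndrome shift over all of $\mathcal{N}(c)$ to leave exactly $(-1)^{g_v}$.
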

\begin{proof}
    Consider first the bit to check update, 
    \[
    \bar{m}'_{v\to c} = \pi_v + \sum_{c'\in\cN(v)/c}m_{c'\to v}
    \]
    Each factor is multiplied by $(-1)^{g_v}$, and we thus have $\bar{m}'_{v\to c}\to (-1)^{g_v} \bar{m}'_{v\to c}$. Now, the check to bit equation is, 
    \[
    \bar{m}'_{c\to v} = 2(-1)^{s_c + d_c} \atanh\left(\prod_{w\in\cN(c)/v}\tanh\left[\frac{\bar{m}_{w\to c}}{2}\right]\right)
    \]
    transforming the syndrome,
    \[
    s_c \to s_c \oplus \bigoplus_{u\in\cN(c)} g_u.
    \]
    Moreover, since $\tanh, \atanh$ are odd functions, one accumulates a net factor of 
    \[
    (-1)^{\bigoplus_{u\in\cN(c)} g_u + \bigoplus_{w\in\cN(c)/v} g_w} = (-1)^{g_v}
    \]
    resulting in $\bar{m}'_{c\to v} \to (-1)^{g_v} \bar{m}'_{c\to v}$. As a result, 
\[
    F_{\BP}(G_g m; s \oplus H_X g; G_g\pi) = G_g F_{\BP}(m; s; \pi)
    \]
\end{proof}

\begin{lemma}[Mapping to ordinary BP]\label{lem:ordinaryBPmap}
    For every $i \geq 0$ the discrete inference machine satisfies the following relations to the ordinary BP iterations using the cumulative correction $z$. Immediately before the feedback, 
    \[
    \hat{L}^{(i)}_{v} =  (-1)^{z_v^{(i-1)}} \bar{L}^{(i)}_{v}
    \]
    and after feedback, 
    \[
    m^{(i)} = G_{z^{(i)}} \bar{m}^{(i)}.
    \]
    Moreover, if ordinary BP clears the syndrome at iteration $i$, so does the discrete autonomous machine.
\end{lemma}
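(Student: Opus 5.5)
We prove the lemma by induction on $i$, using the gauge covariance of BP (the Proposition above) as the only real input. The ordinary BP run is fixed with syndrome $s^{(0)}$ and prior $\bar\pi=\pi^{(0)}$. The invariants to carry are
\[
m^{(i)} = G_{z^{(i)}}\bar m^{(i)},\qquad \pi^{(i)} = G_{z^{(i)}}\bar\pi,\qquad s^{(i)} = s^{(0)}\oplus H_X z^{(i)} .
\]
The syndrome relation holds directly from the update rule and $z^{(i)}=\bigoplus_{j\le i} f^{(j)}$. The prior relation holds because each feedback multiplies $\pi_v$ by $(-1)^{f_v^{(i)}}$, and these signs compose to $(-1)^{z_v^{(i)}}$. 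The base case $i=0$ is trivial, since $z^{(0)}=\bm 0$ and all messages vanish.

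\emph{Inductive step.} Assume the invariants at $i-1$. The pre-feedback quantities $\hat m^{(i)}$ are exactly one BP update $F_{\BP}(m^{(i-1)}; s^{(i-1)};\pi^{(i-1)})$. We verify that the repetition-code check rule $\hat m_{c\to v}=(-1)^{s_c}m_{w\to c}$ is the $d_c=2$ specialization of the general check update: $\tanh$ and $\atanh$ cancel, and the sign is $(-1)^{s_c+2}$. Substituting the hypothesis gives
\[
F_{\BP}(G_{z^{(i-1)}}\bar m^{(i-1)};\, s^{(0)}\oplus H_X z^{(i-1)};\, G_{z^{(i-1)}}\bar\pi).
\]
By gauge covariance this equals $G_{z^{(i-1)}}\bar m^{(i)}$, so $\hat m^{(i)} = G_{z^{(i-1)}}\bar m^{(i)}$. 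The same argument applied to the belief $\hat L_v^{(i)}=\pi_v^{(i-1)}+\sum_c m_{c\to v}^{(i-1)}$ gives
\[
\hat L^{(i)}_v=(-1)^{z^{(i-1)}_v}\bar L^{(i)}_v,
\]
since every term carries the same factor $(-1)^{z_v^{(i-1)}}$. Here $\bar L^{(i)}$ is indexed consistently with the messages $\bar m^{(i-1)}$ that feed it. The post-feedback step multiplies everything at $v$ by $(-1)^{f_v^{(i)}}$, and $z^{(i)}=z^{(i-1)}\oplus f^{(i)}$, so $m^{(i)}=G_{z^{(i)}}\bar m^{(i)}$ and $\pi^{(i)}=G_{z^{(i)}}\bar\pi$. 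This closes the induction. The step that needs care is this index bookkeeping: the beliefs and feedback at iteration $i$ use data from $i-1$, and the hat/no-hat and $z^{(i-1)}$ versus $z^{(i)}$ labels must line up with the ordinary BP iteration counter.

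\emph{Clearing statement.} First identify the feedback with the hard decision. Since $f_v^{(i)}=\1[\hat L_v^{(i)}>0]$ and $\hat L_v^{(i)}=(-1)^{z_v^{(i-1)}}\bar L_v^{(i)}$, the current residual belief says "flip" exactly when ordinary BP's hard decision $\bar e_v^{(i)}=\1[\bar L_v^{(i)}>0]$ disagrees with the correction already applied. Hence
\[
f^{(i)}=\bar e^{(i)}\oplus z^{(i-1)},\qquad z^{(i)}=\bar e^{(i)}.
\]
Ties at $L=0$ are handled by the strict inequality on both sides, and since a flip negates $L$, the convention is symmetric. It follows that
\[
s^{(i)}=s^{(0)}\oplus H_X\bar e^{(i)},
\]
so if ordinary BP's hard decision satisfies $H_X\bar e^{(i)}=s^{(0)}$ at iteration $i$, the machine's syndrome vanishes at iteration $i$. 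Moreover, the cumulative correction equals BP's output, so homological equivalence is inherited as well.
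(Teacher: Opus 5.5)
Your proof is correct and follows the paper's argument. You induct using gauge covariance of $F_{\mathrm{BP}}$ to get $\hat L^{(i)}_v=(-1)^{z^{(i-1)}_v}\bar L^{(i)}_v$, then use $z^{(i)}_v=z^{(i-1)}_v\oplus\mathbf{1}[(-1)^{z^{(i-1)}_v}\bar L^{(i)}_v>0]=\mathbf{1}[\bar L^{(i)}_v>0]$ to conclude $s^{(i)}=s^{(0)}\oplus H_X\bar e^{(i)}$; your explicit prior and syndrome invariants fill in bookkeeping the paper leaves implicit.

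One correction: your remark on ties is wrong. If $\bar L^{(i)}_v=0$ and $z^{(i-1)}_v=1$, then $f^{(i)}_v=0$ and $z^{(i)}_v=1\neq\bar e^{(i)}_v$, so the identity needs $\bar L^{(i)}_v\neq 0$. This condition does hold on the repetition code: $\bar L^{(i)}_v$ is $\pi_v$ times a sum of an odd number of $\pm 1$'s, and $\pi_v\neq 0$ because $0<q<1/2$. The paper's proof relies on this fact implicitly as well.
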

\begin{proof}
    The claim follows at $i=0$ as there are no flips at iteration $0$ and both the dynamics are identical. 
    Suppose the assertion is true at $i-1$. The current machine syndrome is, 
    \[
    s^{(i-1)} = s\oplus H_X z^{(i-1)}
    \]
    Applying the covariance identity gives, 
    \[
    \hat{L}_v^{(i)} = (-1)^{z_v^{(i-1)}} \bar{L}_v^{(i)}
    \]
    Upon which, the machine applies the feedback $f_v^{(i)} = \1[ \hat{L}_v^{(i)} > 0]$. As a result, 
    \begin{align*}
        z_v^{(i)} &= z_v^{(i-1)} \oplus f_v^{(i)} \\ 
        &=  z_v^{(i-1)} \oplus \1[ \hat{L}_v^{(i)} > 0] \\ 
        &=  z_v^{(i-1)} \oplus \1[  (-1)^{z_v^{(i-1)}} \bar{L}_v^{(i)} > 0] \\ 
        &= \1[\bar{L}_v^{(i)} > 0]
    \end{align*}
    Consequently, 
    \[
    s^{(i)} = s \oplus H_X\left[\1[\bar{L}_v^{(i)} > 0]\right]
    \]
    Thus, if ordinary BP clears the syndrome at iteration $i$, so does the discrete machine, with the identical correction. 
\end{proof}

\smsection{Optimal threshold on the repetition code}

We now analyze the performance of the local inference process. To do so, we work with the code on a odd-$d$ periodic chain, and define a hierarchy of intervals covering the code. For each $1 \leq i \leq d$ we define, 
\begin{equation}
    r_i := \lfloor \frac{i-1}{2}\rfloor, \qquad k_i := 2r_i + 1, \qquad W_v^{(i)} : = \{v-r_i , \dots, v+r_i\}
\end{equation}
Now, the following lemma shows that a nonzero leftover error on bit $v$ at iteration $(i)$ can only happen if the window $W_v^{(i)}$ had the wrong majority vote to begin with.
\begin{lemma}[Window majority controls residual error]\label{lem:repwindowmaj}
    Write the initial error locations as $E = \{v \mid e_v^{(0)} = 1\}$. Then, 
    \[
    e_v^{(i)}=1 \Leftrightarrow | E \cap W_v^{(i)}| > \frac{k_i}{2}
    \]
\end{lemma}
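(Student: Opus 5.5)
By \cref{lem:ordinaryBPmap}, the machine's residual error after iteration $i$ is fixed entirely by ordinary BP run on the fixed initial syndrome $s=H_Xe^{(0)}$. Concretely, $z^{(i)}_v=\1[\bar L^{(i)}_v>0]$ and $e^{(i)}_v=e^{(0)}_v\oplus\1[\bar L^{(i)}_v>0]$. So the plan is to compute $\bar L^{(i)}_v$ in closed form on the periodic chain and then read off the sign.

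\textbf{Step 1: propagation of ordinary BP messages.} Every check of the repetition code has degree $2$, with $\cN(c)=\{v,w\}$. The check-to-bit rule therefore reduces to $\bar m^{(j)}_{c\to v}=(-1)^{s_c}\bar m^{(j-1)}_{w\to c}$, since $\tanh$ and $\atanh$ cancel and the factor $(-1)^{d_c}$ is trivial. I will prove by induction on $j$ two facts, starting from $\bar m^{(0)}\equiv 0$.
\begin{itemize}
\item[(a)] $\bar m^{(j)}_{v\to c}=\pi\sum_{u}(-1)^{e_u\oplus e_v}$, where $u$ ranges over $v$ together with the bits on the side of $v$ away from $c$, at distance at most $\lfloor (j-1)/2\rfloor$. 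The sum is empty, and the message zero, for $j=0$.
\item[(b)] $\bar m^{(j)}_{c\to v}$ equals $(-1)^{s_c}$ times the bit-to-check message from $w$ at step $j-1$.
\end{itemize}
For the induction, two facts do the work. First, along the path from $v$ to $u$ the product of the syndrome signs equals $(-1)^{e_v\oplus e_u}$, since the syndrome bits telescope. Second, each iteration advances information by a single bit–check hop, so a full bit-to-bit step takes two iterations.

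\textbf{Step 2: closed form for the belief.} Substituting (a) and (b) into $\bar L^{(i)}_v=\pi+\sum_{c\in\cN(v)}\bar m^{(i-1)}_{c\to v}$ and collecting both sides of $v$ should give
\[
\bar L^{(i)}_v=\pi\sum_{u\in W^{(i)}_v}(-1)^{e_u\oplus e_v}=\pi(-1)^{e_v}\bigl(k_i-2|E\cap W^{(i)}_v|\bigr).
\]
The index bookkeeping has to produce exactly the radius $r_i=\lfloor (i-1)/2\rfloor$. The base case $i=1$ is a check: it gives $\bar L_v=\pi$ with window $\{v\}$.

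I also need the two branches never to overlap on the cycle, which would double-count bits. This holds because $k_i=2r_i+1\le d$ for $i\le d$: the left branch reaches $v-r_i$ and the right branch reaches $v+r_i$, and these are distinct bits mod $d$ when $2r_i<d$. I expect this bookkeeping — matching the iteration count to the window radius and ruling out wrap-around — to be the main, if mundane, obstacle.

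\textbf{Step 3: read off the sign.} Since $0<q<1/2$, we have $\pi<0$. Since $k_i$ is odd, $k_i-2|E\cap W_v^{(i)}|\neq 0$, so there are no ties.
\begin{itemize}
\item If $|E\cap W_v^{(i)}|>k_i/2$, then $\bar L_v>0$ exactly when $e_v=0$. Hence $e^{(i)}_v=e_v\oplus\1[e_v=0]=1$.
\item Otherwise, $\bar L_v>0$ exactly when $e_v=1$. Hence $e^{(i)}_v=0$.
\end{itemize}
Together these give the claimed equivalence.
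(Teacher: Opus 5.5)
Your proof is correct and follows the paper's skeleton: use the mapping lemma to get $z^{(i)}_v=\mathbf{1}[\bar L^{(i)}_v>0]$, establish $\bar L^{(i)}_v=\pi(-1)^{e_v}\bigl(k_i-2|E\cap W^{(i)}_v|\bigr)$, and read off the sign using $\pi<0$ and $k_i$ odd. The only difference is how the closed form is justified. The paper simply asserts that $\bar L^{(i)}_v$ is the exact likelihood ratio between the two configurations ($e^{(0)}|_{W^{(i)}_v}$ and its complement) allowed by the window's internal checks. Your induction on the degree-$2$ recursion $\bar m_{c\to v}=(-1)^{s_c}\bar m_{w\to c}$ actually verifies this, including the radius $r_i=\lfloor (i-1)/2\rfloor$ under the flooding schedule and the no-wrap-around condition $k_i\le d$ that the paper leaves implicit.
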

\begin{proof}
    It is convenient to switch to $\pm 1$ bit magnetization for the following argument. Define the initial magnetization, $\sigma_v := (-1)^{e_v^{(0)}}$ for all $v\in \cV$. Further, for the window $W_v^{(i)}$, define the window magnetization $M_v^{(i)} := \sum_{u\in W_v^{(i)}}\sigma_u$. We have the following relation between the finite time residual errors and majority decoding within a window. 
\[
\bar{L}_v^{(i)} = \pi_v \sigma_v M_v^{(i)} \qquad (-1)^{e_v^{(i)}} = \sgn M_v^{(i)}
\]
To see this, write $w_v = |E \cap W_v^{(i)}|$. The internal checks in $W_v^{(i)}$ have exactly two compatible configurations, $e^{(0)}$ restricted to the path and its bitwise complement. The ratio of these probabilities is, 
\[
\log \frac{q^{w_v}(1-q)^{k_i - w_v}}{q^{k_i - w_v}(1-q)^{w_v}} = (2w_v - k_i)\log{\frac{q}{1-q}} = M_v^{(i)} \pi_v 
\]
Thus, we have $\bar{L}_v^{(i)} = \sigma_v \pi_v M_v^{(i)}$. Now, from \cref{lem:ordinaryBPmap} we have that $\hat{L}^{(i)}_{v} =  (-1)^{z_v^{(i-1)}} \bar{L}^{(i)}_{v}$. Also, recall that $z_v^{(i)} = z_v^{(i-1)}\oplus f_v^{(i)}$. Putting these together, 

\begin{align}
    z_v^{(i)} &= z_v^{(i-1)}\oplus f_v^{(i)}\\
    &= z_v^{(i-1)}\oplus \1[\hat{L}_v^{(i)} > 0]\\
    &= z_v^{(i-1)} \oplus \1[(-1)^{z_v^{(i-1)}} \bar{L}^{(i)}_{v} > 0]  \\ 
    &= \1[\bar{L}^{(i)}_{v} > 0]
\end{align}
Thus, for the residual error we have, 
\begin{align}
    e_v^{(i)} &:= e_v^{(0)}\oplus z_v^{(i)} \\ 
    \implies (-1)^{e_v^{(i)} } &= \sigma_v (-1)^{z_v^{(i)}} \\ 
    &= -\sigma_v \sgn \bar{L}_v^{(i)}\\ 
    &=\sgn M_v^{(i)}
\end{align}
Lastly, we note that $M_v^{(i)} = k_i - 2|E\cap W_v^{(i)}|$ to conclude $e_v^{(i)} \Leftrightarrow | E \cap W_v^{(i)}| > \frac{k_i}{2}$.
\end{proof}

To show a threshold for this decoder, we assume that the initial error $e^{(0)}$ occurs with rate $p$ independently on each qubit. Actually, we only need a weaker property that this distribution satisfies, which is called $p-$boundedness. We now recall the definition of $p-$bounded noise.
\begin{defn}[$p$-bounded noise~\cite{gottesmanFaultTolerantQuantumComputation2014}]
A distribution over $\cV$ is called $p$-bounded if, for every
finite subset \(S\subseteq \cV\),
\begin{equation}
    \Pr\left[S\subseteq E\right]
    \leq
    p^{|S|}.
\end{equation}
\end{defn}
The key property that we will plug into the local majority argument of \cref{lem:repwindowmaj} is that $p-$bounded noise clusters in the following sense: the probability of a window having more than half of its bits being erroneous is bounded exponentially in the window size.

\begin{lemma}[$p-$bounded tail]\label{lem:pboundedtail}
    For every set $S\subseteq \cV$ of $m$ sites, and $E$ drawn from a $p$ bounded distribution over $\cV$, 
    \[
    \Pr(|E\cap A| \geq m/2) \leq e^{-D_p m}
    \]
    where $D_p := -\log\left(2\sqrt{p(1-p)}\right)$ for $p < 1/2$.
\end{lemma}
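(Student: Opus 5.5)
The approach is a union bound over the ways to choose $\lceil m/2\rceil$ elements of $S$, followed by Chernoff-style tilting to reach the exact exponent $D_p$. (The statement writes $A$ for the set $S$; I read them as the same set with $|S|=m$.)

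\textbf{Crude bound.} The event $|E\cap S|\ge m/2$ means some subset $T\subseteq S$ with $|T|=\lceil m/2\rceil$ satisfies $T\subseteq E$. By $p$-boundedness, each such $T$ contributes at most $p^{|T|}\le p^{m/2}$. There are at most $2^m$ such subsets, so
\[
\Pr(|E\cap S|\ge m/2)\le 2^m p^{m/2}=e^{-m\log\frac{1}{2\sqrt p}}.
\]
This already decays exponentially, but only for $p<1/4$, and the exponent is weaker than $D_p$. The missing $(1-p)$ factor is what the tilting argument has to supply.

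\textbf{Stochastic domination step.} To get the sharper rate, I would compare the moment generating function of $|E\cap S|$ with that of a Binomial$(m,p)$ variable. For $\lambda\ge 0$, write
\[
\E\bigl[e^{\lambda |E\cap S|}\bigr]=\E\Bigl[\prod_{u\in S}\bigl(1+(e^\lambda-1)\1[u\in E]\bigr)\Bigr]=\sum_{T\subseteq S}(e^\lambda-1)^{|T|}\Pr[T\subseteq E].
\]
Since $e^\lambda-1\ge 0$, every coefficient is nonnegative, so $p$-boundedness can be applied term by term. This gives
\[
\E\bigl[e^{\lambda |E\cap S|}\bigr]\le \sum_{T\subseteq S}\bigl((e^\lambda-1)p\bigr)^{|T|}=(1-p+pe^\lambda)^m,
\]
which is exactly the i.i.d. Bernoulli$(p)$ moment generating function. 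This expansion is the key step, and the main point to check is the nonnegativity of the coefficients that makes it valid.

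\textbf{Chernoff optimization.} Markov's inequality then yields
\[
\Pr(|E\cap S|\ge m/2)\le \bigl[e^{-\lambda/2}(1-p+pe^\lambda)\bigr]^m .
\]
I minimize over $\lambda$. Setting the derivative to zero gives $e^\lambda=(1-p)/p$, which is nonnegative precisely because $p<1/2$. Substituting back, the bracket becomes $\sqrt{p/(1-p)}\cdot 2(1-p)=2\sqrt{p(1-p)}$. Hence the probability is at most $e^{-D_p m}$ with $D_p=-\log\bigl(2\sqrt{p(1-p)}\bigr)$, and $D_p>0$ for $p<1/2$.

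The only subtlety is the inclusion–exclusion-free expansion of the product in the domination step. Once that is in place, the rest is the standard binomial Chernoff calculation at threshold $1/2$.
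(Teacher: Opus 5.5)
Your proposal is correct and follows the paper's own proof essentially line for line. Like the paper, you expand $\E[e^{\lambda|E\cap S|}]=\sum_{T\subseteq S}(e^\lambda-1)^{|T|}\Pr[T\subseteq E]$, apply $p$-boundedness termwise to get $(1-p+pe^\lambda)^m$, then use Markov's inequality with $e^\lambda=(1-p)/p$; your crude union bound is just an aside. One small wording fix: what $p<1/2$ actually guarantees is $\lambda>0$, i.e.\ $e^\lambda-1>0$, which is the nonnegativity your termwise step relies on, rather than merely that $e^\lambda$ is nonnegative.
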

\begin{proof}
    Denote $X_v:= \1[v\in E]$ and $X_S := \sum_{v\in A} X_v$. For any $\theta > 0$, we have 
    \begin{align}
        \E e^{\theta S} &= \E \prod_{v\in S}e^{\theta X_v} \\ 
        &= \E \prod_{v\in S}\left(1 + (e^\theta - 1)X_v\right) \\ 
        &= \E \sum_{B\subseteq A} (e^\theta -1)^{|B|} \\ 
        &=  \sum_{B\subseteq A} (e^\theta -1)^{|B|}\Pr(B \subseteq E) \\ 
        &\leq \sum_{B\subseteq A} \left[p (e^{\theta}- 1)^{|B|}\right] \\ 
        &\leq (1-p +pe^\theta)^m
    \end{align}

    We now apply Markov's inequality, 
    \begin{align}
        \Pr(S \geq m/2) &= \Pr(e^{\theta S} \geq e^{\theta m/2})\\ 
        &\leq \frac{\E e^{\theta S}}{e^{\theta m/2}} \\ 
        &\leq \left((1-p+pe^\theta) e^{-\theta/2}\right)^m
    \end{align}
    Now we choose $\theta$ such that $e^\theta=(1-p)/p$, and conclude. 
\end{proof}

We are now ready to state and prove the threshold theorem for the discrete local inference machine operating on the repetition code.  
\begin{theorem}[Rep code local inference] \label{thm:repcodesupp}
    The discrete local inference machine operating on the distance$-d$ repetition code achieves a logical error rate, 
    \[
    p_L \leq e^{-D_p d}
    \]
    where $D_p > 0$ is a constant whenever the physical error rate $p < p_{\textrm{th}} = 0.5$. Moreover, it clears the syndrome in $i_{\clr}$ iterations satisfying, 
    \[
    \E[i_{\clr}] = O(\log d).
    \]
\end{theorem}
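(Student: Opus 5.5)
The proof combines \cref{lem:repwindowmaj} with \cref{lem:pboundedtail}, applied to the nested windows $W_v^{(i)}$ on the odd-$d$ periodic chain.

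\emph{Logical error rate.} Take $i=d$, so that $k_d = d$ and $W_v^{(d)}$ is the whole chain for every $v$. By \cref{lem:repwindowmaj}, $e^{(d)}_v = 1$ if and only if $|E| > d/2$, and this condition does not depend on $v$. Hence after $d$ iterations the residual error is either $\bm 0$ (success) or the all-ones vector. The all-ones vector is the logical operator: it has trivial syndrome but is not in $\operatorname{row}(H_Z)$.

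The residual syndrome at every iteration is $H_X e^{(i)}$. Once it vanishes, the BP messages are driven by a trivial syndrome. Through \cref{lem:ordinaryBPmap}, the residual error at each later step is then fixed by window majorities. I will argue that the first clearing configuration is already the final one, $\bm 0$ or $\bm 1$. Either way, a logical failure requires $|E| \ge d/2$. Applying \cref{lem:pboundedtail} with $A = \cV$ and $m = d$ gives
\[
p_L \le \Pr(|E| \ge d/2) \le e^{-D_p d}.
\]
Since $D_p > 0$ exactly when $2\sqrt{p(1-p)} < 1$, i.e.\ $p < 1/2$, this gives $p_{\thr} = 0.5$.

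\emph{Clearing time.} The syndrome clears at iteration $i$ when $e^{(i)}$ is constant, and in particular whenever $e^{(i)} = \bm 0$. By \cref{lem:repwindowmaj} and a union bound over the $d$ sites,
\[
\Pr(i_{\clr} > i) \le \Pr(e^{(i)} \neq \bm 0) \le \sum_v \Pr\big(|E\cap W_v^{(i)}| > k_i/2\big) \le d\, e^{-D_p k_i}.
\]
Since $k_i \ge i-1$ and $i_{\clr} \le d$ always, I will use
\[
\E[i_{\clr}] = \sum_{i\ge 0}\Pr(i_{\clr} > i) \le i_0 + \sum_{i > i_0} d\, e^{-D_p (i-1)}
\]
and set $i_0 = \lceil (\log d)/D_p \rceil + 1$. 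The tail sum is then a geometric series bounded by $O(1/(1-e^{-D_p}))$, so $\E[i_{\clr}] = O(\log d)$ with the constant depending on $p$.

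\emph{Main obstacle.} The delicate step is the logical accounting. I must show that the syndrome first clears at a configuration equal to the final output, rather than at some intermediate constant vector that later changes. I also need a definition of failure that matches the statement. My route is to check that, with the window form of \cref{lem:repwindowmaj}, any clearing configuration at iteration $i<d$ with $e^{(i)}\in\{\bm 0,\bm 1\}$ forces every window majority to agree. The all-ones case then again implies $|E\cap W_v^{(i)}|>k_i/2$ for all $v$. Covering the chain by disjoint translates of such windows then bounds this event by $e^{-D_p \Omega(d)}$ as well. A second issue is the off-by-one between $>$ and $\ge$ in \cref{lem:pboundedtail}, which is harmless because $k_i$ is odd.
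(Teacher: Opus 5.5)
Your proof follows the paper's. The clearing-time part is the same: the inclusion $\{i_{\clr}>i\}\subseteq\{e^{(i)}\neq\bm 0\}$, the union bound over sites via Lemma~\ref{lem:repwindowmaj} and Lemma~\ref{lem:pboundedtail}, $k_i\ge i-1$, the cutoff $\lceil(\log d)/D_p\rceil+1$, and the geometric tail. The reduction of $p_L$ to $\Pr(|E|>d/2)$ is also the same. The paper handles the step you flag as the main obstacle in one line: it asserts that $\cF=\{e^{(i_{\clr})}\neq\bm 0\}$ is exactly $\{|E|>d/2\}$ ``by Lemma~\ref{lem:repwindowmaj}''. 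You are right that this needs an argument, but the one you propose does not supply it.

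\textbf{The gap.} Covering the chain by disjoint translates of $W_v^{(i)}$ only bounds the probability of $\{e^{(i)}=\bm 1\}$. It does not show that a logical failure forces $|E|\ge d/2$, and your displayed bound $p_L\le\Pr(|E|\ge d/2)$ depends on exactly that. The covering is also lossy: the union of $\lfloor d/k_i\rfloor$ disjoint windows can contain only about $d/2$ sites, which halves the exponent. And since $i_{\clr}$ is random, you would still need a union bound over its possible values, which you do not mention.

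\textbf{The fix: exact double counting.} On the periodic chain with $i\le d$ (so $k_i\le d$), each site lies in exactly $k_i$ of the windows $W_v^{(i)}$. Hence
\[
\sum_{v\in\cV}|E\cap W_v^{(i)}|=k_i\,|E| .
\]
If $e^{(i)}=\bm 1$, every summand exceeds $k_i/2$, so $|E|>d/2$. If $e^{(i)}=\bm 0$, every summand is below $k_i/2$ (as $k_i$ is odd), so $|E|<d/2$. So every clearing configuration at a level $i\le d$, in particular $e^{(i_{\clr})}$, equals the global-majority configuration $e^{(d)}$. Therefore $\cF=\{|E|>d/2\}$ exactly, and Lemma~\ref{lem:pboundedtail} with $A=\cV$ gives $p_L\le e^{-D_p d}$ with the lemma's $D_p$. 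No union over $i$ is needed.

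\textbf{What ``final'' has to mean.} This is the precise sense in which the first clearing configuration is the final one. The machine does \emph{not} in general stay in the code space after $i_{\clr}$. For example, take $d=15$ and $E=\{1,2,6,7,11,12\}$. Every $5$-window contains two errors, so $e^{(5)}=\bm 0$. Yet $W_4^{(7)}=\{1,\dots,7\}$ contains four, so $e^{(7)}\neq\bm 0$. So ``final'' must mean $e^{(d)}$.
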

\begin{proof}
    Now, if $i_{\clr}$ is the clearing time, then the failure event is exactly $\cF := \{e^{(i_{\clr})} \neq 0\}$. By \cref{lem:repwindowmaj}, $\cF$ is exactly the event $|E| > d/2$, thereby, 
    \[
    \Pr(\cF) \leq e^{-D_p d}
    \]
    by an application of \cref{lem:pboundedtail}. As a result $i_{\clr}\leq d$ for any initial error.

    Define the $i-$level failure event in $W_v^{(i)}$,
    \[
    \cA_i := \{\exists v\in \cV \mid |E\cap W_v^{(i)}| > k_i/2\} .  \]
    A union bound gives, 
    \[
    \Pr(\cA_i) \leq \sum_{v\in \cV} \Pr\left(|E\cap W_v^{(i)| > k_i/2}\right) \leq d e^{-D_p k_i}
    \]
     Furthermore, the event $\cF \cup\{i_{\clr} > i\}$ can only happen if $\cA_i$ happens, i.e., $\cF \cup\{i_{\clr} > i\}\subseteq \cA_i$. To see this, note that \cref{lem:repwindowmaj} implies that if $\cA_i$ does not occur, then $e^{(i)}=0$ and decoding succeeds. Since $k_i \geq i-1$, we get, 
     \[
     \Pr(\{i_{\clr} > i\}) \leq d\cdot e^{-D_p (i-1)}
     \]
     Now if we set $I := 1 + \lceil (\log d)/D_p \rceil$, 
     \begin{align*}
        \E[i_{\clr}] &= \sum_{i=0}^{\infty}\Pr(\{i_{\clr} > i\}) \\ 
        &\leq  \sum_{i=0}^{I-1}1 + \sum_{i=I}^{\infty}\Pr(\{i_{\clr} > i\}) \\ 
        &\leq I + \sum_{i=I}^{\infty} d\cdot e^{-D_p (i-1)} \\ 
        &\leq 1 + \lceil \frac{(\log d)}{D_p }\rceil + \frac{1}{1-e^{-D_p}} \\ 
        &\leq O(\log d)
     \end{align*}
\end{proof}

\smsection{Numerical Details}

The continuous dynamics is simulated at a discretization of $dt$, with  Poisson clock rings simulated by drawing a waiting time $\sim \text{Exp}(|S(t)|)$ where $S(t)$ is the set of currently active defects. Each $(d,p)$ data point in the numerical result is a Monte-Carlo estimate of the logical error rate with either $500,000$ shots, or $100$ logical errors, whichever occurs first. Additionally, all the LLRs used in the numerics are clamped at a fixed value of $\texttt{CLAMP}$. The inference dynamics is run for a maximum of $t_{\max}$ time, with a discretization of $dt$ and a relaxation rate of $\tau$. If the syndrome does not converge uptil $t_{\max}$, the sample is declared a logical error. Moreover, the $t_{\clr}$ times are shown only for samples whose syndrome clears at $t \leq t_{\max}$.

\smsubsection{Additional numerics}
We first report some additional data on the 2D toric code in \cref{fig:toric_supp}. In panel (a), we show the logical error rate as a function of code size at $p=0.06$ for both the continous machine and MWPM. As expected, we observe the local algorithm suppresses the logical error with a slower slope $\alpha$ than MWPM, where we have $p_L \sim (p/p_{\textrm{th}})^{\alpha p}$. Scanning across a range of $p$, we next plot the slope $\alpha(p)$ for both the decoders in panel (b). For MWPM, as expected, the slope approaches $0.5$ as $p\to 0$. However, we note that it approaches $\approx 0.25$ for the inference machine. Since one needs the inference velocity $\sim 1/\tau$ to be twice that of the Poisson clock rate~\cite{lakeFastOfflineDecoding2025,rigorousinfmachine}, we expect the threshold to change with $\tau$: saturating to $\sim 8\%$ as $\tau\to 0$ and saturating to zero as $\tau\to 0.5$. To study the robustness for $\tau \lesssim 0.2$, we show a plot of the logical error rates for relaxation times (c) $\tau=0.15$ and (d) $\tau=0.2$ in \cref{fig:toric_supp}. The inset of (c) shows the logical error rate at $p=0.07$ for $\tau \in \{0.1,0.15,0.2\}$.

We further report the $t_{\clr}$ histograms of all the codes studied for the continuous inference machine in \cref{fig:hists_supp}. For the 2D toric code and the point-like 3D toric code, we find that all syndromes clear. For the other two codes, we impose a $t_{\max}=1000$, and count a sample as a logical failure if it does not converge the syndrome within $t < t_{\max}$. The histogram, as well as the expected times in the main text, are shown only for samples that clear the syndrome. 
\begin{table}[h]
\caption{Numerical parameters used for each code family.}
\label{tab:numerical-details}
\begin{ruledtabular}
\begin{tabular}{lccccccc}
                 & $dt$ & $\tau$ & $t_{\max}$ & $p_{\mathrm{th}}$
                 & $q$ & \texttt{CLAMP} & \textsc{Evidence} \\
\colrule
2D TC            & $0.001$ & $0.1$ & --- & $0.08$
                 & $0.05$ & $1000$ & $1/4$ \\
3D TC point      & $0.001$ & $0.1$ & --- & $0.02$
                 & $0.05$ & $1000$ & $1/6$ \\
3D TC membrane   & $0.01$  & $0.1$ & $1000$ & $0.19$
                 & $0.125$    & $1000$ & $1/4$ \\
BB               & $0.01$  & $0.1$ & $1000$ & $0.06$
                 & $0.05$ & $1000$ & $1/6$ \\
\end{tabular}
\end{ruledtabular}
\end{table}

\begin{figure}[t]
    \centering
    \includegraphics[width=1.0\linewidth]{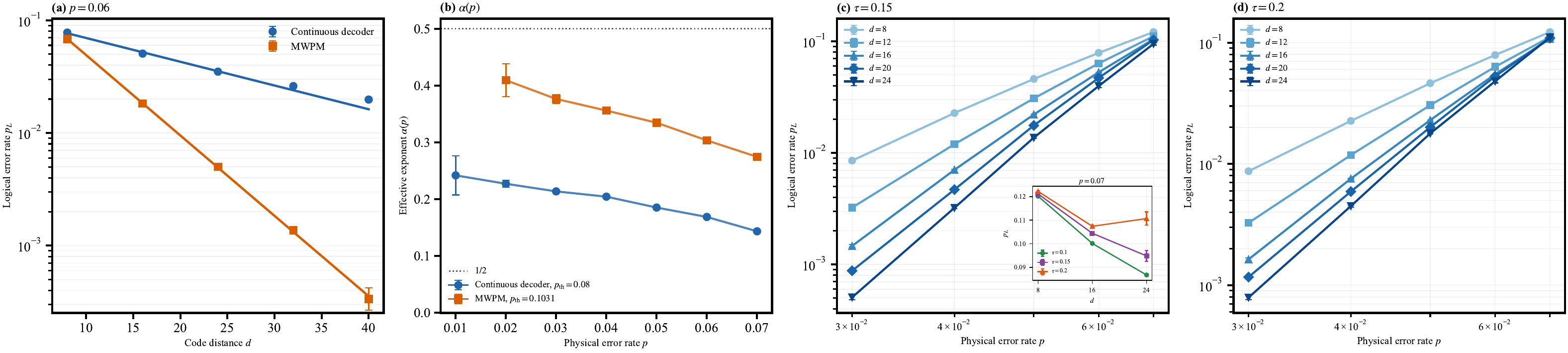}
    \caption{The 2D toric code: (a) Logical error rate as a function of code size of the continuous inference machine and MWPM at $p=0.06$. (b) Extracted slope $\alpha(p)$ from the fit $p_L = A(p) \cdot (p/p_{\textrm{th}})^{\alpha (p) \cdot d}$ for both the decoders. (c-d) Performance of the continuous machine using $\tau \in \{0.15,0.2\}$.}
    \label{fig:toric_supp}
\end{figure}

\begin{figure}[b]
    \centering
    \includegraphics[width=1.0\linewidth]{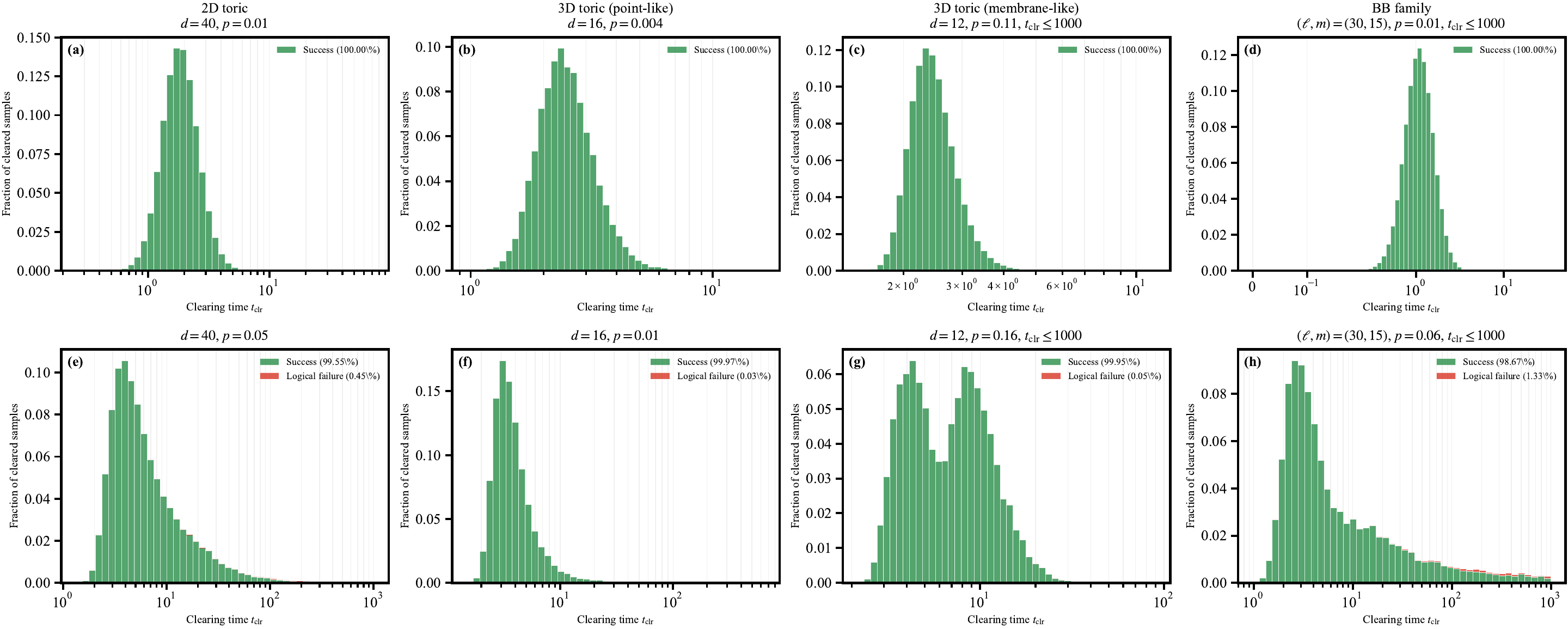}
    \caption{Histogram of the clearing time of the continuous inference machine. (a-d) 2D TC, point-like 3D TC, membrane-like 3D TC, BB family for the smallest noise rate studied, and (e-h) for the largest sub-threshold noise rate. }
    \label{fig:hists_supp}
\end{figure}

\end{document}